\keywords{Parity games, Alternating automata, Parity automata, Weak automata}
\setlist[description]{leftmargin=10.95003pt}
\newcounter{countitems}
\newcounter{nextitemizecount}
\newcommand{\setupcountitems}{%
  \stepcounter{nextitemizecount}%
  \setcounter{countitems}{0}%
  \preto\item{\stepcounter{countitems}}%
}
\newcommand{\computecountitems}{%
  \edef\@currentlabel{\number\c@countitems}%
  \label{countitems@\number\numexpr\value{nextitemizecount}-1\relax}%
}
\newcommand{\nextitemizecount}{%
  \getrefnumber{countitems@\number\c@nextitemizecount}%
}
\newcommand{\previtemizecount}{%
  \getrefnumber{countitems@\number\numexpr\value{nextitemizecount}-1\relax}%
}
\newenvironment{AutoMultiColItemize}{%
\ifnumcomp{\nextitemizecount}{>}{3}{\begin{multicols}{2}}{}%
\setupcountitems\begin{itemize}}%
{\end{itemize}%
\unskip\computecountitems\ifnumcomp{\previtemizecount}{>}{3}{\end{multicols}}{}}
\begin{document}

\title[Register Games]{Register Games\rsuper*}

\titlecomment{{\lsuper*}The article extends \cite{lehtinen2018modal} and parts of \cite{BL18}.}
\author[K.~Lehtinen]{Karoliina Lehtinen}	%optional
\address{University of Liverpool, United Kingdom}	%optional
\email{k.lehtinen@liverpool.ac.uk}

\author[U.~Boker]{Udi Boker}	%required
\address{Interdisciplinary Center (IDC) Herzliya, Israel}	%required
\email{udiboker@idc.ac.il}  %optional
\thanks{Research supported by the Israel Science Foundation grant 1373/16 and the EPSRC grant EP/P020909/1
(Solving Parity Games in Theory and Practice)}	%optional

%\thanks{thanks 2, optional.}	%optional

%%%%% Macros!

\newcommand{\G}{G} %parity game
\newcommand{\V}{V} %Vertices
\newcommand{\vi}{v_\iota} %initial position
\newcommand{\pos}{v} % a position
\newcommand{\posb}{w} %another position
\newcommand{\E}{E} % Edges
\newcommand{\e}{e} %edge variable
\newcommand{\PA}{\Omega} % priority assignement
\newcommand{\pV}{p} %priority variable
\newcommand{\pVV}{q} %\nother priority variable
\newcommand{\CD}{I} %co-domain
\newcommand{\minp}{i} %min priority
\newcommand{\maxp}{d} %max priority
\newcommand{\play}{\pi} % a play
\newcommand{\strat}{\sigma}
\newcommand{\stratA}{\tau}
\newcommand{\player}{P} %player
\newcommand{\Adam}{Adam\xspace} %Adam's name
\newcommand{\Eve}{Eve\xspace} % Eve's name
\newcommand{\rge}[2]{\mathcal{R}^{#1}_{\text{\sc{e}}}(#2)}
\newcommand{\rgo}[2]{\mathcal{R}^{#1}_{\text{\sc{a}}}(#2)}
\newcommand{\rt}{\bar r} %register tuple
\newcommand{\WinE}[1]{\text{\sc W}_{\text{\sc{e}}}(#1)}
\newcommand{\WinA}[1]{\text{\sc W}_{\text{\sc{a}}}(#1)}

\newcommand{\tree}{t}
\newcommand{\Gabc}{\Gamma} % game alphabet
\newcommand{\product}[2]{#1\times #2} % synchronised product

\newcommand{\class}{C} % a class of structures

\newcommand{\dc}{dc-size\xspace} %disjoint cycles

\newcommand{\pick}{\texttt{choose}}
\newcommand{\update}{\texttt{new}}

\newcommand{\lab}{\mathrm{label}}
%%Macros
\newcommand{\St}{~|~}
\newcommand{\con}{\cdot}
\newcommand{\tuple}[1]{\langle #1  \rangle}
\newcommand{\pair}{\tuple}
\newcommand{\triple}{\tuple}

\newcommand{\A}{{\mathcal A}}
\newcommand{\B}{{\mathcal B}}
\newcommand{\C}{{\mathcal C}}
\newcommand{\D}{{\mathcal D}}
\newcommand{\F}{{\mathcal F}}
\newcommand{\M}{{\mathcal M}}
\renewcommand{\P}{{\mathcal P}}
\newcommand{\R}{{\mathcal R}}
\renewcommand{\S}{{\mathcal S}}
\newcommand{\W}{{\mathcal W}}

\newcommand{\True}{\mathtt{true}}
\newcommand{\False}{\mathtt{false}}

\newcommand{\Nat}{\ensuremath{\mathbb{N}}\xspace}
\newcommand{\Rat}{\ensuremath{\mathbb{Q}}\xspace}
\newcommand{\Reals}{\ensuremath{\mathbb{R}}\xspace}

\renewcommand{\phi}{\varphi}
\newcommand{\limplies}{\to} % to be like \land, \lor
\newcommand{\liff}{\leftrightarrow} % to be like \land, \lor

\newcommand{\Space}{\vspace{0.2cm}}
\newcommand{\Paragraph}[1]{\paragraph{#1.}}

\newcommand{\NotNeeded}[1]{}
\newcommand{\todo}[1]{\textsf{To do: #1}}

\newcommand{\BitZero}{\tt 0} \newcommand{\BitOne}{\tt 1} \newcommand{\BitSep}{\tt \$}

\newcommand{\win}[1]{\mathsf{Win}_{#1}}
\newcommand{\Nwin}[1]{\overline{\win{#1}}}
% parity games
\newcommand{\game}[2]{\ensuremath{\mathcal{G}(#1 , #2)}\xspace} % model-checking parity game

\newcommand{\AutTuple}{\tuple}
\newcommand{\MaybeNeeded}[1]{}

\newcommand{\Func}[1]{{\mathsf{#1}}}
\newcommand{\move}{\Func{move}}
\newcommand{\reset}{\Func{reset}}
\newcommand{\Suf}{\Func{suffixes}}
\newcommand{\Subtrees}{\Func{subtrees}}
\newcommand{\Bit}{\Func{bit}}
\newcommand{\Next}{\Func{Next}}
\newcommand{\NextIndex}{\Func{NextIndex}}
\newcommand{\BP}{\Func{B}^+}

%% etc.

%% required for running head on odd and even pages, use suitable
%% abbreviations in case of long titles and many authors:

\begin{abstract}
The complexity of parity games is a long standing open problem that saw a major breakthrough in 2017 when two quasi-polynomial algorithms were published. 

This article presents a third, independent approach to solving parity games in quasi-polynomial time, based on the notion of \emph{register game}, a parameterised variant of a parity game.
%We define register automata that recognise parity games on which a player wins the $k$-register game. 
The analysis of register games leads to a quasi-polynomial algorithm for parity games, a polynomial algorithm for restricted classes of parity games and a novel measure of complexity, the \textit{register index}, which aims to capture the combined complexity of the priority assignement and the underlying game graph.

 We further present a translation of alternating parity word automata into alternating weak automata with only a quasi-polynomial increase in size, based on register games---this improves on the previous exponential translation. 
%We propose a condition that allow automata-based approaches to solving parity games to be extended to translations on word automata.

We also use register games to investigate the parity index hierarchy: while for words the index hierarchy of alternating parity automata collapses to the weak level, and for trees it is strict, for structures between trees and words, it collapses \emph{logarithmically}, in the sense that any parity tree automaton of size $n$ is equivalent, on these particular classes of structures, to an automaton with a number of priorities logarithmic in $n$.
\end{abstract}

\maketitle

\section{Introduction}\label{sec:Introduction}

A play in a parity game consists of a player, whom we shall call Eve, and her opponent, Adam, moving a token along the edges of a graph labelled with integer priorities, forever, thus forming an infinite path. Eve's objective is to force the highest priority that occurs infinitely often to be even, while Adam tries to stop her.

These games arise at the intersection of logic, games, and automata theory. In particular, they are the acceptance games for alternating parity automata, both on trees and on $\omega$-words.
The complexity of solving parity games---that is, of deciding which player has a winning strategy---still is, despite extended efforts, an open problem: it is in {\sc UP}$ \cap ${\sc coUP}~\cite{jurdzinski1998deciding} yet it is not known to admit a polynomial algorithm. After over twenty-five years of incremental improvements, Calude, Jain, Khoussainov, Li, and Stephan published the first quasi-polynomial solution~\cite{calude2017deciding}. Only a little later in the same year, Jurdzi\'nski and Lazi\'c presented an independent progress-measure based algorithm that achieves the same complexity~\cite{jurdzinski2017succinct}.
This article presents a third, independent approach to solving parity games in quasi-polynomial time. The automata-theoretic method also solves the more general problem of translating alternating parity word automata into alternating weak automata with quasi-polynomial increase in state space, and offers some new insights into the descriptive complexity of parity games---that is, the complexity of formalisms that can recognise winning regions in parity games---and the parity index problem.

\subsubsection*{Register Games} Our first contribution is to present \emph{register games}, a parameterised variant of parity games, that we then use to analyse the complexity of parity games and parity automata, both on infinite words and infinite trees.
A register game consists of a normal parity game, augmented with a fixed number of registers that keep partial record of the history of the game. Although the register game is harder for Eve than the parity game on the same arena, if she can win the parity game, then she can also win the register game as long as she has a large enough number of registers. Exactly how many registers she needs depends on the parity game arena. We call this the register-index of the parity game: it is a measure of complexity which takes into account both the priority assignment of the parity game and the structure of the underlying graph.
%Eve is given the additional choice at each move of which registers to update and which to empty. This leads to an output that depends both on Eve's choice and the register contents; Eve wins the register game if the highest output occurring infinitely often is even. In contrast to parity games, the number of priorities over which the winning condition operates depends on the number of registers, rather than the number of priorities in the arena.

Two key properties of register games then enable us to derive complexity results for finite  parity games as well as for parity automata on both words and restricted classes of trees. First, register games are \emph{automata-definable}: there is, for every integer $k$, an alternating parity tree automaton that accepts infinite parity games of fixed maximal priority in which Eve wins the $k$ register game; furthermore, for any alternating parity automaton $\A$, we can define a family of parameterised automata $\A_k$ that use the $k$-register game as their acceptance game instead of standard parity games.
Second, the register-index is bounded \emph{logarithmically} in the number of disjoint cycles of a parity game arena. 

\subsubsection*{A quasi-polynomial algorithm for parity games} For finite arenas, where the number of disjoint cycles is bounded by the number of positions, solving parity games reduces to solving the register game with a number of registers logarithmic in the size of the game. This results in a quasi-polynomial parity game algorithm with running time in $2^{O((\log n)^3)}$, as well as a parameterised polynomial algorithm that solves classes of parity games with bounded register-index.

\subsubsection*{Word automata transformations} The complexity of solving parity games is intimately related to the complexity of turning alternating parity word automata (APW) into alternating weak word automata (AWW). Indeed, solving parity games amounts to checking the emptiness of an APW on the trivial singleton alphabet. Since the emptiness of AWW on the singleton alphabet can be checked in linear time~\cite{KV98b}, a translation from APW to AWW immediately yields an algorithm for solving parity games of which the time complexity matches the size-increase and time-complexity of the automata translation.
 Note, however, that the automata-translation question is more general, since automata need not be defined over a one-letter alphabet, and even a binary alphabet can add substantial complexity.
 
Nevertheless, until 2017, the best known algorithms for the two problems were roughly the same: exponential in the number of priorities. A competitive tool for solving parity games is even based on the translation from parity to weak automata \cite{di2016solving}.
In 2017, however, the advent of quasi-polynomial algorithms created a gap between the complexity of solving parity games, and the automata translation.

In this article, we show that the analysis of parity games that are infinite, or at least of unbounded size, allows us to generalise the quasi-polynomial time complexity of solving parity games to the blow-up incurred when turning alternating parity word automata into alternating weak automata\footnote{Recently a translation based on universal trees and infinite progress measures has further improved this upper bound to $n^{O(\log \frac{d}{\log n})}$\cite{DJL19}.}.

\subsubsection*{The index hierarchy}While the translation of alternating parity automata into weak is always possible for word automata~\cite{KV01c}, this is not the case for tree automata. Indeed, while alternating weak automata suffice to capture all $\omega$-regular word languages, no fixed number of priorities suffices to capture all regular tree languages---this follows from the equivalence between alternating parity automata and the modal $\mu$-calculus~\cite{wilke2001alternating}, and the strictness of the modal $\mu$-calculus alternation hierarchy~\cite{bradfield98strict}. We say that the \emph{parity index hierarchy} is strict on trees, but collapses to the weak level over words. 

We study automata on structures that are, in some sense, \emph{between} words and trees and show, using register games, a \emph{logarithmic} collapse of the index hierarchy: for every alternating parity automaton $\A$ with $n$ states, there is an alternating parity automaton $\A'$ with only $O(\log n)$ priorities that is equivalent to $\A$ over these structures.

\subsubsection*{Of other quasi-polynomial automata} \textit{Separating automata} have been proposed as a way to understand the underlying combinatorial structure of the different quasi-polynomial algorithms~\cite{toolbox}. We conclude with a discussion on what it would take for other quasi-polynomial algorithms for parity games, when seen as {separating automata}, to be extended into translations of alternating parity automata into weak automata.\\

%\subsection*{Related publications}
This article is based on Lehtinen's quasi-polynomial algorithm for parity games~\cite{lehtinen2018modal}, which introduced the notion of register games, and Boker and Lehtinen's extension of this technique into a translation of alternating parity word automata into weak automata~\cite{BL18}. Here the definition of register games is simplified, generalised to infinite parity games, and presented from an automata-theoretic, rather than modal $\mu$-calculus, perspective.

%In addition, this article presents analysis of how the results on word automata generalise to tree automata over restricted structures: while the parity-index hierarchy collapses at the weak (alternation-free) level for words, and is strict on tree, we identify structures between trees and words on which there is a \emph{logarithmic} collapse. Further additional material includes a comparison with automata-theoretic interpretations of other quasi-polynomial algorithms for solving parity games, and a discussion of when an algorithm for solving parity games extends into a translation of parity automata into weak automata.

%We denote the set $\{1,2,...,n\}$ with $[n]$.

%\section{Parity Games and Register Games}

\section{Parity Games}\label{sec:Parity-Games}
 
\begin{defi}[Parity games]
A parity game is an infinite-duration two-player zero-sum path-forming game, played between Eve and her opponent Adam on a potentially infinite game graph $\G =(\V,\V_{\text{\sc e}},\V_{\text{\sc a}},\E,\PA)$ called the \textit{arena}. The \emph{positions} $\V$ of the arena are partitioned into those belonging to \Eve, $\V_{\text{\sc e}}$, and those belonging to \Adam, $\V_{\text{\sc a}}$. The \textit{priority assignment} $\PA: \V \rightarrow \CD$ maps every position in $V$
% $\E\subseteq \V \times \V$
to a \textit{priority} in a finite co-domain $\CD=\{\minp,\minp+1..,\maxp\}$ where $\minp\in \{0,1\}$. The edge-relation $E\subseteq \V \times \V$ defines the successors of each position. Without loss of generality we assume all positions to have at least one successor. On finite parity games of size $n$, we can assume $\maxp\leq n$.

A play is an infinite sequence of positions $\play=\pos_0\pos_1...$ such that $(\pos_{i},\pos_{i+1})\in E$ for all~$i\geq 0$. 
%
%A finite prefix of a play is called a partial play.
%
%A play $\play$ consists of an infinite path in $\G$ starting at the initial position $\vi\in \V$. 
A play $\play$ is winning for Eve if the highest priority that occurs infinitely often along $\play$ is even; otherwise it is winning for Adam.

A (positional) \textit{strategy} $\strat$ for a player $\player\in \{\text{\Adam,\Eve}\}$ in a parity game $\G$ maps every position $\pos$ belonging to $\player$ in $\G$  to one of its successors.
%outgoing edges $\strat(\pos)$.
%A play $\play = \e_0 \e_1...$ is said to agree with $\strat$ when for all $i$, if the origin $\pos$ of $\e_i$ belongs to $\player$, then $\e_{i}=\strat(\pos)$. 
 A play is said to agree with a strategy $\strat$ for $\player$ if $\pos_{i+1}=\strat(\pos_i)$ whenever $\pos_i$ belongs to $\player$.
A strategy $\strat$ for player $\player$ is said to be winning  for $\player$ from a position $\pos$ if all plays starting at $\pos$ that agree with $\strat$ are winning for~$\player$. We call the positions from which \Eve has a winning strategy \Eve's winning region in $\G$, written $\WinE{\G}$; \Adam's winning region is written $\WinA{\G}$.

 We write $\G,\pos$ for the parity game $\G$ with a designated initial position $\pos$. A winning strategy in $\G,\pos$ is a strategy that is winning from $\pos$.
\end{defi}

\begin{thm}[Positional Determinacy~\cite{emerson1991tree,mostowski1991games}]
In all positions in a parity game, one of the players has a positional winning strategy.
\end{thm}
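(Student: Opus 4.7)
The plan is to prove positional determinacy by induction on the number of distinct priorities appearing in the arena, following Zielonka's attractor-based decomposition. The base case is when the priority assignment uses a single value: every play then has the same highest priority, so one player wins from every position using any positional strategy (for instance one that fixes, at each of its positions, an arbitrary successor).

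For the inductive step, let $d$ be the maximal priority in the arena and assume without loss of generality that $d$ is even; the odd case is symmetric, with the roles of \Eve and \Adam exchanged. Set $X = \PA^{-1}(d)$ and let $A$ be \Eve's attractor to $X$, namely the set of positions from which \Eve has a positional strategy that forces the token into $X$ in finitely many steps. The restriction $\G \setminus A$ is a well-defined subgame: every \Adam-position in $\V \setminus A$ has all its successors in $\V \setminus A$, and every \Eve-position in $\V \setminus A$ has at least one successor in $\V \setminus A$, for otherwise the position would already lie in $A$. In $\G \setminus A$ the maximal priority is strictly less than $d$, so by induction both players have positional winning regions there. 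If \Adam's winning region in $\G \setminus A$ is empty, then combining \Eve's inductive positional strategy on $\G \setminus A$ with her attractor strategy on $A$ yields a single positional strategy that wins for \Eve throughout $\G$: any resulting play either remains in $\G \setminus A$ from some point on, where \Eve wins by induction, or visits $X$ infinitely often, making $d$ the highest infinitely recurrent priority. Otherwise, let $W$ be \Adam's non-empty winning region in $\G \setminus A$, compute \Adam's attractor $B$ to $W$ in $\G$, observe that \Adam wins $B$ positionally by combining his attractor strategy with his inductive winning strategy on $W$, and recurse on the strictly smaller subgame $\G \setminus B$.

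The step I expect to be most delicate is handling potentially infinite arenas, since the outer recursion that strips a non-empty \Adam-winning region at each iteration is only obviously bounded for finite arenas. For infinite arenas one must compute attractors as possibly transfinite least fixpoints, verify that the induced attractor strategies are still positional, and argue that the nested extraction of \Adam-winning regions stabilises; the fact that the number of priorities is finite (since $\CD$ is a finite set) is what ultimately bounds the inner induction. An alternative that avoids this bookkeeping is to prove positional determinacy directly via parity progress measures into the ordinals, which establishes the result for arenas of arbitrary cardinality in one stroke and is in the same spirit as the fixpoint machinery used elsewhere in the paper; either route suffices, and the paper uses the theorem only as a black box.
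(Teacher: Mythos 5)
The paper does not prove this theorem at all: it is imported as a black box from \cite{emerson1991tree,mostowski1991games}, so there is no internal proof to compare against. Your route---Zielonka's attractor decomposition by induction on the number of priorities---is a perfectly standard and more elementary alternative to the signature/fixpoint arguments of the cited works, and your sketch is essentially correct for \emph{finite} arenas: the base case, the attractor-plus-induction argument when \Adam's subgame region is empty (either priority $d$ recurs infinitely often, or the play has a tail inside the subgame where \Eve's inductive strategy wins from the re-entry point), and the recursion on $\G \setminus B$ are all sound.

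The genuine gap is the infinite case, which you flag but do not close---and the paper needs it, since it explicitly plays parity and register games on potentially infinite arenas (e.g.\ acceptance games of automata over infinite trees). Your remark that finiteness of the priority co-domain ``ultimately bounds the inner induction'' addresses only the induction on priorities; it does nothing for the outer loop that strips a non-empty \Adam-winning region at each iteration, which on an infinite arena must be iterated transfinitely and need not shrink the game in any well-founded way by cardinality alone. To finish, you must (i) run the extraction as a transfinite iteration producing regions $B_\beta$ up to a fixpoint, and (ii) prove that \Adam wins the union $\bigcup_\beta B_\beta$ with a \emph{single} positional strategy. The standard argument for (ii) is the switching lemma: since each complement $\G$ minus the already-removed regions is a trap for \Adam, a play following the stage-wise combined strategy can only move from a stage-$\beta$ region into a region of strictly smaller stage, so by well-foundedness of the ordinals it switches stages finitely often and its tail agrees with one stage strategy, which is winning. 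Without this uniformisation step, ``argue that the nested extraction stabilises'' is an assertion, not a proof. Your fallback---positional determinacy via progress measures into the ordinals---would indeed handle arbitrary cardinality in one stroke, but as written it too is only named, not carried out; either route needs the corresponding details supplied before the argument is complete.
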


It will sometimes be convenient, for clarity and aesthetics,  to assign priorities to edges, with $\Omega:E\rightarrow I$. A parity game with edge priorities can be converted into one with vertex priorities by introducing intermediate, priority-carrying nodes onto edges and giving a low priority to other vertices. Conversely, a vertex-labelled parity game can be converted into an edge-labelled one by assigning the priority of a vertex to its outgoing edges.

\section{Register Games}\label{sec:RegisterGames}

This section describes the key technical development of this article: register games. These are parameterised variations of parity games, also played on a parity game arena. 
 Crucially, the winning condition of a $k$-register game is a parity condition that ranges over priorities 
 $[0..2k+1]$
 rather than the priorities of the arena.
 The larger the parameter $k$, the easier the $k$-register game becomes for Eve who, on arenas in which she has a winning \textit{parity game} strategy, is guaranteed to also have a winning  $k$-register game strategy, for some large enough $k$. Note that the mechanics of register games have been simplified, compared to their first appearance in~\cite{lehtinen2018modal}; see Remark \ref{rem:diff}.
 
\subsection{Definitions and Observations}
Informally, the $k$-\textit{register game} 
 consists of a normal parity game, augmented with a 
 tuple $(r_0,...,r_{k})$
 of registers that keeps a partial record of the history of the game. During a turn, several things happen: the player whose turn it is in the parity game moves onto a successor position of their choice, which has some priority $p$ and Eve chooses an index $i$, 
 $0\leq i\leq {k}$; 
 then, the registers get updated according to both~$i$ and $p$, and an output between $0$ and $2k+1$ is produced, also according to $i$ and $p$.
 
 The update wipes out the contents of registers with index lower than $i$: for $j< i$, $r_j$ is set to $0$. Meanwhile $r_i$ is set to $p$ and $r_j$ for $j>i$ to $\max(r_j,p)$. In other words, each register $r_j$ keeps track of the highest priority seen since Eve last chose $i$ with~$i\geq j$.
 The output is $2i$ if $\max(r_i,p)$ is even, and $2i+1$ otherwise.
 % If $i=0$, the output is $1$.
%
 Then, in the limit, Eve wins a play if the largest output that occurs infinitely often is even.

 Since the winning condition of the register game is a parity condition, we can formally define the $k$-register game on a parity game arena $\G$ as a parity game on an arena $\rge{k}{\G}$, of which the positions are positions of $\G$ paired with vectors in 
 $\CD^{k+1}$
  that represent the contents of the registers. An additional binary variable $t$ indicates whether the next move consists of Eve's choice of register ($t=0$), or a move in the underlying parity game ($t=1$).
 
\begin{defi}[Register game]\label{D:register-game}
Let $\G$ be a parity game  $(\V^\G,\V^\G_{\text{\sc e}},\V^\G_{\text{\sc a}},\E^\G,\PA^\G)$ and let $\CD$ be the co-domain of $\PA^\G:\V^\G\rightarrow \CD$. For a fixed parameter $k\in \mathds{N}$, the arena of the $k$-register game $\rge{k}{\G}$ on $\G$ in which \Eve controls the registers, consists of $\rge{k}{\G}=(\V,\V_{\text{\sc e}},\V_{\text{\sc a}},\E,\PA)$ as follows. 

While $\G$ carries its priorities on its vertices, for the sake of clarity, $\rge{k}{\G}$ carries them on its edges, $\PA : \E\rightarrow [0..2k+1]$.

\begin{itemize}
\item $\V$ is a set of positions $(\pos,\rt,t)\in \V^\G \times \CD^{k+1}\times \{0,1\}$,
\item $\V_{\text{\sc a}}$ consists of $(\pos,\rt,1)$ such that $\pos\in V^G_{\text{\sc a}}$,
\item $\V_{\text{\sc e}}$ consists of $\V\setminus \V_{\text{\sc a}}$,
%\item $\vi$ is $(v^G_\iota,\bar 0, 0)$,
\item $\E$ is the disjoint union of sets of edges $\E_{\mathit{move}}$ and $E_i$ for all $i\in [0..k]$ where:

$\E_{\mathit{move}}$ consists of edges $((\pos,\rt, 1),(\posb,\rt,0))$ such that $(\pos,\posb)\in \E^\G$.

For each $i\in [0.. k]$, $\E_i$ consists of edges $((\pos,\rt, 0),(\pos,\rt',1))$ such that:
\begin{itemize}
\item $r'_j=\max(r_j,\PA^\G(\pos))$ for $j>i$,
\item $r'_j=\PA^\G(\pos)$ for $j=i$, and
\item $r'_j=0$ for $j< i$.
\end{itemize}
\item $\Omega$ assigns priorities from $[0..2k+1]$ to edges as follows:
\begin{itemize}
\item Edges of $E_{\mathit{move}}$ have priority $0$;
\item $((p,\rt, 0),(p,\rt',1))\in E_i$ has priority $2i$ if $\max(r_i,p)$ is even, and priority $2i+1$ otherwise.
\end{itemize}
\end{itemize}

\subsubsection*{Terminology} Given a play in $\rge{k}{\G}$, we call the \textit{underlying play} its projection onto the first element of each visited position. At a position $(\pos,\bar r,t)$, we write that: a priority $\pV\in \CD$ \textit{occurs}  if $\PA^\G(\pos)=\pV$; a register $i\in [0.. k]$ \emph{contains} a priority $\pV\in \CD$ if $r_i=\pV$; Eve \emph{chooses} register $i$ and \emph{outputs} $j\in [0.. 2k+1]$ if the play follows an edge in $\E_i$ of priority $j$. \\

 A strategy for \Adam in $\G$ induces a strategy for \Adam in $\rge{k}{G}$. A strategy for \Eve in $\G$ paired with a register-choosing strategy in $\rge{k}{G}$ induces a strategy for \Eve in $\rge{k}{G}$.
 Observe that as the winning condition depends on the outputs that occur infinitely often, and that registers up to the largest one chosen infinitely often renew their contents infinitely often, if a player has a winning strategy in $\rge{k}{\G}$ from $(\pos, \rt, t)$, then they have a winning strategy from all $(\pos, \mathunderscore\,,\mathunderscore)$. We will then simply say that they have a winning strategy from $\pos$.
\end{defi}

The $k$-register game arena $\rgo{k}{\G}$ where \Adam controls the registers is similar to $\rge{k}{\G}$ except that positions $(\pos,\rt,0)$ are in $\V_{\text{\sc a}}$, edges in $\E_{\mathit{move}}$ have priority $0$, and edges $((\pos,\rt, 0),(\pos,\rt',1))$ of $\E_i$ have priority $2i+2$ if $r_i$ is even, and  $2i+1$ otherwise. The $k$-register game with \Eve (resp., \Adam) in control of registers on an arena $\G$ is the parity game on the arena $\rge{k}{\G}$ (resp., $\rgo{k}{\G}$).
Unless specified, the $k$-register game on $\G$ refers to  $\rge{k}{\G}$.

For fixed $k$, $\rge{k}{\G}$ is a parity game on an arena of size polynomial in the size of $\G$ and of priority domain $[0..2k+1]$: it can be solved in polynomial time in the size of $\G$ using any solver exponential in the number of priorities (e.g. a progress measure algorithm~\cite{jurdzinski2000small}).\\

 We now establish two important facts about register games: if \Adam has a winning strategy from a position $\pos$ in a parity game $\G$, then he has a winning strategy from all positions $(\pos,\rt,t)$ in $\rge{k}{\G}$ for any $k$. However, if \Eve has a winning strategy in $\G$ from $\pos$, then she also has a winning strategy from all $(\pos,\rt,t)$ in $\rge{k}{\G}$ as long as $k$ is large enough.

 \begin{lem}
 If \Adam has a winning strategy in the parity game $\G$ starting at $\pos$, then he also has a winning strategy in $\rge{k}{\G}$ starting at $\pos$
 %$(\pos,\mathunderscore\,, \mathunderscore)$
 for all $k$.
 \end{lem}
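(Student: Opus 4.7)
The plan is to lift Adam's winning strategy $\tau$ from $\G$ to $\rge{k}{\G}$ and then argue that every play agreeing with the lifted strategy has an odd largest output in the limit. The lift is immediate: Adam controls exactly the move-positions $(\pos,\rt,1)$ with $\pos\in V^\G_{\text{\sc a}}$ (the register-choice positions all belong to Eve), so define his strategy in $\rge{k}{\G}$ to pick $(\pos,\rt,1)\mapsto(\tau(\pos),\rt,0)$. Every play $\play$ consistent with this lifted strategy then projects onto a play in $\G$ consistent with $\tau$, so the largest priority $\pV\in\CD$ visited infinitely often by the underlying play is odd.

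Now I want to identify the largest output occurring infinitely often. Since Eve is forced to make infinitely many register choices and the index set $[0..k]$ is finite, by pigeonhole there exists a largest index $i^{\ast}$ chosen infinitely often. Any choice of index strictly above $i^{\ast}$ happens only finitely often, so every output of the form $2i$ or $2i+1$ with $i>i^{\ast}$ is produced only finitely often. Hence the highest output occurring infinitely often is at most $2i^{\ast}+1$, and it suffices to show that $2i^{\ast}+1$ itself is produced infinitely often.

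The core step is this: fix a suffix of $\play$ after which (a) no index above $i^{\ast}$ is ever chosen, and (b) no priority above $\pV$ is ever visited in the underlying play. In this suffix, between any two consecutive $i^{\ast}$-choices no index $\ge i^{\ast}$ other than $i^{\ast}$ is chosen, so by the invariant stated in the definition, $r_{i^{\ast}}$ records exactly the largest priority of the underlying play since the previous $i^{\ast}$-choice. Because $\pV$ occurs infinitely often and no priority exceeds $\pV$ in the suffix, there are infinitely many consecutive pairs of $i^{\ast}$-choices between which $\pV$ is seen. At each such choice, $\max(r_{i^{\ast}},\PA^\G(\pos))=\pV$, which is odd, so the output is $2i^{\ast}+1$.

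The main obstacle is only a bookkeeping subtlety rather than a real difficulty: one must carefully use both (a) and (b) in the chosen suffix to argue that $r_{i^{\ast}}$ at an $i^{\ast}$-choice is the true maximum since the previous $i^{\ast}$-choice \emph{and} that this maximum equals $\pV$ infinitely often, rather than some smaller even priority. Once these two facts are in hand, combining them with the pigeonhole argument on $i^{\ast}$ gives that $2i^{\ast}+1$ is the limsup of outputs along $\play$, which is odd, so Adam wins $\play$. Since $\play$ was arbitrary, the lifted strategy is winning from every $(\pos,\rt,t)$, and in particular from $\pos$.
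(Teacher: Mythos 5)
Your proof is correct and takes essentially the same approach as the paper's: lift Adam's strategy, let $i^{\ast}$ be the highest register Eve chooses infinitely often, pass to a suffix after the last occurrence of any priority above $\pV$ and the last choice above $i^{\ast}$, and conclude that $2i^{\ast}+1$ is output infinitely often and is the largest output occurring infinitely often. Your window-by-window bookkeeping of $r_{i^{\ast}}$ between consecutive $i^{\ast}$-choices is just a more explicit rendering of the paper's observation that once $\pV$ occurs, $r_{i^{\ast}}$ holds $\pV$ until Eve next chooses $i^{\ast}$.
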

 
 \begin{proof}
Assume \Adam has a strategy $\stratA$ in $\G$ that is winning from $\pos$. On any play $\play$ starting at $\pos$
%$(\pos,\mathunderscore\,,\mathunderscore)$ 
that agrees with $\stratA$, the highest priority $\pV$ seen infinitely often is odd. Let $i$ be the highest register chosen infinitely often by \Eve during $\play$. Then, eventually---that is, after the last occurrence of anything higher than $\pV$ and after \Eve no longer chooses registers higher than $i$---whenever $\pV$ is seen, $r_i$ is set to $\pV$ and remains at $\pV$ until Eve again chooses~$i$. Then, since $\pV$ is odd, this outputs $2i+1$. Since $\pV$ occurs infinitely often and $i$ is picked infinitely often, $2i+1$ is output infinitely often. It is also the highest value output infinitely often because $i$ is the highest index picked infinitely often. Since $2i+1$ is odd, $\play$ is winning for \Adam. The strategy $\stratA$ is therefore winning for \Adam in $\rge{k}{\G}$ from  position $\pos$.
 \end{proof}

 \begin{lem}
 If \Eve has a winning strategy in the parity game $\G$ with parity co-domain $\CD$ at a position $\pos_\iota$, then she also has a winning strategy in $\rge{k}{\G}$ from 
$\pos_\iota$ 
 %$(\pos, \mathunderscore\,,\mathunderscore)$
 for $k\geq i$ where $2i$ is the largest even priority in $\CD$.
 \end{lem}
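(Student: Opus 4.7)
The plan is to exhibit a winning strategy for Eve in $\rge{k}{\G}$ by pairing her positional winning parity strategy in $\G$ with a register-choosing rule that makes every register-update edge output exactly the priority just visited. Since $k\geq i$ and the largest priority in $\CD$ is bounded by $2i+1$, this rule will always pick a register index in the legal range $[0..k]$.

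First I would invoke positional determinacy to fix a positional winning strategy $\sigma$ for Eve in $\G$ from $\pos_\iota$, and define a strategy $\sigma'$ in $\rge{k}{\G}$ as follows: at her positions of the form $(\pos,\rt,1)$ with $\pos\in V_{\text{\sc e}}^\G$, move to $(\sigma(\pos),\rt,0)$; at her positions of the form $(\pos,\rt,0)$, choose register index $j=\lfloor\PA^\G(\pos)/2\rfloor$. Every priority $p\in\CD$ satisfies $p\leq 2i+1$, so $j\leq i\leq k$, and $\sigma'$ is well-defined.

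The key calculation is that when Eve picks index $j=\lfloor p/2\rfloor$ at a position of current priority $p$, the output of the resulting edge is exactly $p$: by definition of the update the new value of $r_j$ is $p$, so $\max(r_j,p)=p$, and hence the output is $2j$ if $p$ is even and $2j+1$ if $p$ is odd; in both cases this equals $p$. Combined with the fact that move-edges carry priority $0$, this means that along any play following $\sigma'$, the outputs on register-update edges exactly reproduce the sequence of priorities visited in the underlying play, while move-edges contribute only zeros.

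To conclude, I would take any play $\play$ in $\rge{k}{\G}$ starting at $\pos_\iota$ and agreeing with $\sigma'$, observe that its underlying play $\play'$ agrees with $\sigma$ and is therefore winning for Eve in $\G$, so the largest priority $p^\star$ occurring infinitely often in $\play'$ is even, and then appeal to the calculation above to conclude that $p^\star$ is also the largest output occurring infinitely often in $\play$. Hence $\play$ is winning for Eve, making $\sigma'$ a winning strategy in $\rge{k}{\G}$ from $\pos_\iota$. There is no real obstacle; the only thing to track is the index bound, which is precisely why the hypothesis takes the form $k\geq i$ for $2i$ the largest even priority.
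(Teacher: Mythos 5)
Your strategy is exactly the paper's (choosing register $i$ when a priority $2i$ or $2i+1$ occurs is the same rule as choosing $\lfloor p/2\rfloor$), but your key calculation misreads the output rule, and the invariant you build on it is false. In the definition of $\rge{k}{\G}$, an edge of $\E_j$ from $(\pos,\rt,0)$ has priority $2j$ or $2j+1$ according to the parity of $\max(r_j,\PA^\G(\pos))$ where $r_j$ is the register's content \emph{before} the update, not the freshly written value. You argue ``the new value of $r_j$ is $p$, so $\max(r_j,p)=p$'', which silently substitutes the post-update value; under that reading the registers would be irrelevant to the outputs and the preceding lemma (Adam wins $\rge{k}{\G}$ for all $k$ whenever he wins $\G$) would collapse, since its proof explicitly relies on a stale odd value in $r_i$ forcing output $2i+1$ when Eve later chooses $i$. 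Concretely: start with $r_1=0$, let priority $3$ occur; your rule chooses register $1$, outputs $3$, and sets $r_1=3$. Now let priority $2$ occur; your rule again chooses register $1$, but $\max(r_1,2)=3$ is odd, so the output is $3$, not $2$. So the outputs do \emph{not} reproduce the sequence of priorities of the underlying play, and your final step --- identifying the largest output occurring infinitely often with the largest priority occurring infinitely often --- does not follow as stated.

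The conclusion is still true for your strategy, but it needs the ``eventually'' argument the paper gives: on a play agreeing with $\sigma$, the highest priority $p^\star=2i$ seen infinitely often is even, so register $i$ is the highest register chosen infinitely often; after the last occurrence of any priority above $2i$ and the last choice of a register above $i$, the register $r_i$ eventually holds only values at most $2i$ and is set to $2i$ each time Eve chooses it, so from some point on every choice of register $i$ outputs $2i$, while registers below $i$ output values below $2i$ and registers above $i$ contribute only finitely often. Transient odd outputs such as the $3$ in the example above occur only finitely often above $2i$, so $2i$ is the largest output seen infinitely often and Eve wins. In short: the strategy is right, the bound $j\leq i\leq k$ is right, but the per-step claim ``output $=$ current priority'' must be replaced by a tail argument about when the registers stabilise.
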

 
 \begin{proof}
 Given a strategy $\strat$ in the parity game $\G$ that is winning for \Eve from $\pos_\iota$, let $\strat'$ be the following strategy for \Eve in $\rge{k}{\G}$ where $k\geq i$ and $2i$ is the largest even priority in $\CD$: at positions $(\pos,\rt,1)$ the strategy $\strat'$ follows $\strat$, that is if $\pos$ belongs to \Eve, then $\strat'(\pos,\rt,1)=(\posb,\rt,0)$ where $\strat(\pos)=\posb$; at positions $(\pos,\rt,0)$ where $\PA^\G(\pos)=2i$ or $2i+1$ for some $i$, the strategy $\strat'$ chooses register $i$.
 
 We now argue that $\strat'$ is winning for \Eve in $\rge{k}{\G}$ from $\pos_\iota$. Since $\strat'$ follows $\strat$ in the underlying game, the highest priority $\pV$ that occurs on any play beginning at $\pos_{\iota}$ that agrees with $\strat'$ is even, say $\pV=2i$ for some $i$. The highest register chosen infinitely often is therefore $i$. Since eventually nothing higher than $\pV$ occurs anymore, $r_{i}$ will eventually remain $\pV$ in perpetuity, and therefore, eventually, every time \Eve chooses $i$, this outputs $2i$; since this is the highest register chosen infinitely often, nothing higher is output infinitely often. Hence \Eve wins every play that agrees with $\strat'$ from $\pos_\iota$. 
 \end{proof}
 
% \begin{lem}
%If \Eve wins $\rge{k}{\G}$ from some $(\pos,\rt,t)$, she wins $\rge{k}{\G}$ from all $(\pos,\mathunderscore\,,\mathunderscore)$
%\end{lem}

The number of registers that \Eve needs to win the register game on a parity game in which she has a winning strategy depends on the complexity of her winning strategy. We define this as the \emph{register index} of a parity game and consider it as a measure of complexity for parity games, comparable to measures such as entanglement~\cite{berwanger2012entanglement}.

 \begin{defi}[Register-index]
A parity game $\G$  has register-index $k$ at $\pos\in \WinE{\G}$ if $k$ is the smallest integer such that \Eve wins $\rge{k}{\G}$ from $\pos$,
%$(\pos,(0,\dots,0),1)$
and at $\pos\in \WinA{\G}$ if $k$ is the smallest integer such that \Adam wins $\rgo{k}{\G}$ from 
$\pos$. A parity game $\G$ has register index $k$ if $k$ is the minimal integer such that it has register index up to $k$ at all positions.
%$(\pos,(0,\dots,0),1)$.  
 \end{defi}

\begin{rem}\label{rem:diff}
The register game defined here differs slightly from the one defined in~\cite{lehtinen2018modal} and used in~\cite{BL18}. Here, for a more elegant presentation, Eve chooses a register at every turn, but has an additional $0$-indexed register she can default to. This avoids having  an additional priority that encodes that Eve must reset infinitely often. Furthermore, and perhaps  more significantly, the register update mechanism is simplified: instead of values shifting between registers, all registers below the chosen one get reset to $0$ while the chosen register updates to the current priority.
This new game mechanism simplifies Eve's strategies and in particular the proof of Theorem~\ref{lem-log}.
 Finally, we do not restrict register games to finite arenas, but consider them on potentially infinite ones.
\end{rem}

\subsection{Examples}\label{sec:examples} In this section we explore which features of parity games affect the register index, and which do not.
Since the register-index depends on a player's winning regions, the examples presented in this section are all on one-player games: all positions belong to \Adam. They can  be embedded into two-player games of arbitrary complexity; however, for the register-index, only the game induced by the simplest winning strategy matters.

We begin by looking at register games with a small number of registers, starting with $0$-register games.
We observe that \Eve wins the $0$-register game if she has a strategy $\sigma$ such that plays that agree with $\sigma$ see finitely many odd priorities, such as the game depicted in Figure \ref{fig:0}. Indeed, if \Eve follows such a strategy, then eventually all outputs in the $0$-register games are $0$.
%; conversely, in the $0$-register game, every time an odd priority occurs the maximal priority $1$ occurs, so if Eve can not avoid seeing odd priorities infinitely often, she will lose the $0$-register game. 
%Register-index $0$ characterizes particularly simple games in which \Eve has a winning strategy for which odd priorities occur only a finite (but unbounded) number of times on all plays.
Among well-known parity games with register-index $0$ are the example games for which strategy improvement and divide-and-conquer algorithms exhibit worst-case complexity~\cite{benerecetti2017robust,friedmann2009exponential}.

While parity games with register-index $0$ must be quite simple, register-index $1$ already captures games with more complexity. Example \ref{ex-small} for instance dicusses games with high entanglement, high tree-width, and a large number of priorities that have register-index $1$. Known examples of families of parity games of register-index $1$ are those that exhibit worst-case complexity for Zielonka's recursive algorithm and the quasi-polynomial progress measure algorithms~\cite{friedmann2011recursive,fearnley2017ordered}. In these games every odd priority is immediately followed by a larger even priority, so \Eve still has an easy $1$-register game strategy consisting of choosing register $1$ whenever an even priority occurs, and register $0$ otherwise. %We also note that sequences of duplicate priorities  do not affect the register-index since \Eve can choose register $0$ on all bar one of the priorities in the sequence; this means that parity games with sequences of duplicate priorities---which can also be used to force progress measure algorithms to exhibit high complexity---do not have high register-index.

\begin{exa}\label{ex-small}
Figure \ref{fig-high} shows an edge-labelled arena in which \Eve wins the parity game but loses the $0$-register game.
In the $0$-register game, \Adam's strategy is to loop at the current position once (this clears the register), then move to the other position and repeat; \Eve has no choice but to produce outputs $1$ and $0$ infinitely often.
%The register contents during a play in the $2$-register game is shown in Figure \ref{fig-tab}. 
\Eve can win the $1$-register game by choosing register $1$ after seeing $2$, and register $0$ after seeing $1$ or $0$.

Figure \ref{fig-1} illustrates a slightly more complicated family of parity games, which has linear entanglement, tree-width and number of priorities, yet constant register-index $1$. It consists of arenas with vertices $v_0...v_n$ with priority edges $(v_j,v_i)$ of priority $2i$ for $j\leq i$ and $2j-1$ for~$j>i$.
\Eve's strategy is to choose register $1$ whenever she sees an even priority and register $0$ otherwise. Since odd priorities only occur after a larger even priority, with this strategy register $1$ permanently contains even priorities. Then every occurrence of an even priority leads to output $2$, while odd priorities lead to output $1$.
\end{exa}
\begin{figure}
\begin{minipage}[t]{.5\linewidth}\centering
{
\vspace{-6.5cm}
\begin{tikzpicture}
[->,>=stealth',shorten >=1pt,auto,node distance=2cm,
                    semithick]
  \tikzstyle{every state}=[fill=none,draw=black,text=black]
  
  \node[state,initial] (A) { };
  \node[state] (B) [right of=A] { };
  \node[state] (C) [right of=B] { };

  \path (A) edge [loop above] node {$0$} (A)
        (B) edge [loop above] node {$2$} (B)
        (C) edge [loop above] node {$0$} (C)

        (A) edge [bend left] node {$1$} (B)
        (B) edge [bend left] node {$3$} (C);        

\end{tikzpicture}\caption{Parity game of register-index $0$}\label{fig:0}

\vspace{1.005cm}
\begin{tikzpicture}
[->,>=stealth',shorten >=1pt,auto,node distance=2cm,
                    semithick]
  \tikzstyle{every state}=[fill=none,draw=black,text=black]
  
  \node[state,initial] (A) { };
  \node[state] (B) [right of=A] { };
    
  \path (A) edge [loop above] node {$0$} (A)
        (B) edge [loop above] node {$0$} (B)
        (A) edge [bend left] node {$1$} (B)
        (B) edge [bend left] node {$2$} (A);        

\end{tikzpicture}\caption{Parity game of register-index $1$}\label{fig-high}
}

\end{minipage}%
\hfill
\begin{minipage}[t]{.45\textwidth}\centering

  {
\begin{tikzpicture}[->,>=stealth',shorten >=1pt,auto,node distance=4cm,
                    semithick]
  \tikzstyle{every state}=[fill=none,draw=black,text=black]

  \node[state] (A)                    {$v_3$};
  \node[state]         (B) [right of = A] {$v_2$};
  \node[state]         (C) [below of = A] {$v_0$};
  \node[state]         (D) [right of = C] {$v_1$};;

  \path (A) edge [loop above] node {$6$} (A)
        (C) edge [loop below] node {$0$} (C)
        (B) edge [loop above] node {$4$} (B)
        (D) edge [loop below] node {$2$} (D)
        (A) edge [bend right=10] node [below] {$5$} (B)
        (B) edge [bend right=10] node [above] {$6$} (A)
        (B) edge [bend right=10] node [left] {$3$} (D)
        (D) edge [bend right=10] node [right] {$4$} (B)
        (C) edge [bend right=10] node [below]{$2$} (D)
        (D) edge [bend right=10] node [above]{$1$} (C)
        (A) edge [bend right=10] node [left]{$5$} (C)
        (C) edge [bend right=10] node [right]{$6$} (A)
        (A) edge [bend right=17] node [left]{$5$} (D)
        (D) edge [bend right=17] node [right]{$6$} (A)
        (B) edge [bend right=17] node [left]{$3$} (C)
        (C) edge [bend right=17] node [right]{$4$} (B);
\end{tikzpicture}\caption{Parity game of register-index $1$}\label{fig-1}

}
\end{minipage}
\end{figure}

We have seen some parity games of low register-index. Building parity games of high register-index is more involved and requires, as we shall see, exponentially many positions.

\begin{lem}\label{example-high}
For all $n$, there exists a parity game of register-index at least $n$.
\end{lem}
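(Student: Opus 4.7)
The plan is to construct an explicit family $\{G_n\}_{n \geq 1}$ of parity games inductively, so that $G_n$ has register-index at least $n$. For the base case, take $G_1$ to be the two-position arena of Figure~\ref{fig-high}, which already has register-index $1$ by the reasoning in Example~\ref{ex-small}.

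For the inductive step, I would build $G_{n+1}$ by nesting a priority-shifted copy of $G_n$ (all priorities raised by $2$) inside a new ``outer'' structure consisting of two positions $a,b$ belonging to \Adam with priorities $2n+1$ and $2n+2$ respectively, with edges allowing \Adam to move freely between the outer positions and the entry/exit points of the inner copy of $G_n$. Eve still wins the underlying parity game on $G_{n+1}$ because the highest priority $2n+2$ is even and can be forced to appear infinitely often whenever $a$ is visited infinitely often, so the register-index is well-defined. Eve's winning strategy in $\rge{n+1}{G_{n+1}}$ is obtained by combining the inner strategy for $\rge{n}{G_n}$ on registers $r_0,\ldots,r_{n-1}$ with the use of the new top register $r_n$ to absorb occurrences of $2n+1$ and $2n+2$.

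The core of the proof is then the lower bound: showing that \Adam wins $\rge{k}{G_{n+1}}$ for every $k \leq n$. I would dichotomise on \Eve's behaviour with her top register $r_k$. If \Eve chooses index $k$ only finitely often after some point, then $r_k$ becomes irrelevant and her effective strategy uses only $r_0,\ldots,r_{k-1}$; \Adam then forces play into the inner copy of $G_n$ and applies the inductive-hypothesis strategy there, suitably lifted through the priority shift, to force an odd output of bounded index to dominate. If \Eve chooses $k$ infinitely often, \Adam forces visits to position $a$ (priority $2n+1$) between her choices of $k$, so that the old value of $r_k$ is always at least $2n+1$ at the moment of choice; this makes the output $2k+1$ at those steps, which is the maximum output infinitely often since no larger index than $k$ is available. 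Either way, \Adam wins.

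The main obstacle is formalising the first branch of this dichotomy: translating an \Eve strategy in $\rge{k}{G_{n+1}}$ that stops using $r_k$ into an \Eve strategy in $\rge{k-1}{G_n}$ to which the inductive hypothesis applies. The subtlety is that the shifted priorities inside $G_{n+1}$ lie in $\{2,\ldots,2n+2\}$ rather than $\{0,\ldots,2n\}$, and the extra residue left in $r_k$ by earlier choices could in principle perturb later outputs. I would handle this by a positional/determinacy argument: fix \Adam's positional winning strategy in $\rge{k-1}{G_n}$ from the inductive hypothesis, and verify that, after enough visits to $a$ have purged the unwanted contents below $r_k$, the behaviour on the lower registers is governed exactly by the inner parity game with a shifted priority domain, for which the inductive lower bound applies.
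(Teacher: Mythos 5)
There is a genuine gap, and it sits exactly where the paper's construction does its real work. Your inductive step uses a \emph{single} shifted copy of $G_n$ attached to two outer positions $a,b$, with Adam moving ``freely'' between them. First, free movement already breaks the prerequisite that Eve wins the underlying parity game: all positions belong to Adam, so nothing is ``forced'' by Eve, and Adam can cycle through $a$ (priority $2n+1$) and a low-priority loop of the inner copy without ever visiting $b$; that cycle is dominated by the odd $2n+1$, so Adam wins the parity game and $G_{n+1}$ does not witness the lemma at all. Suppose you patch this by forcing every cycle through $a$ to also pass through $b$ (as in the paper, where $H_{n+1}$ has the edge $(v_0,v_1)$ of priority $2n+1$ and $(v_1,v_0)$ of priority $2n+2$). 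Then the second branch of your dichotomy fails: the claim ``the old value of $r_k$ is at least $2n+1$ at the moment of choice, making the output $2k+1$'' conflates ``at least $2n+1$'' with ``odd''. Register updates are max-updates, so once the even $2n+2$ from $b$ enters $r_k$ it stays there through any later visit to $a$ ($\max(2n+2,2n+1)=2n+2$), and Eve's choice of $k$ then outputs the \emph{even} $2k$. Eve exploits this by timing her choices of $k$ around visits to $b$, so that $\max(r_k,p)$ is even whenever she picks $k$. To defeat her, Adam must force Eve to \emph{flush} $r_k$ strictly between a visit to $b$ and the next visit to $a$, and then force her to spend $r_k$ again strictly between that visit to $a$ and the next visit to $b$ --- and each forcing requires an entire inner subgame in which Adam can threaten to win with the inductive strategy. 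With one inner copy, the forced cyclic order through $a$ and $b$ leaves no subgame available for one of the two phases. This is precisely why the paper's $H_{n+1}$ contains \emph{two disjoint} copies of $H_n$: Adam uses the first copy, after the $2n+2$ edge, to make Eve spend her top register (clearing the contamination), and the second copy, after the $2n+1$ edge, to force the odd output $2n+1$; it is also why the family has exponential size, which the lemma's surrounding text flags as necessary.

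A second, related error is mechanical: you propose to wait ``after enough visits to $a$ have purged the unwanted contents below $r_k$''. Occurrences of priorities never purge anything --- the updates $r_j\mapsto\max(r_j,p)$ are monotone, and registers are only cleared when \emph{Eve} chooses an index $i$, which resets $r_j$ to $0$ for $j<i$ and $r_i$ to the current priority. Adam can therefore only engineer situations in which Eve is compelled to spend a register; he cannot clean registers by showing priorities. The paper's proof makes this compulsion precise through an inductive invariant (its item (i)): from the initial position with all register contents bounded by $2n-1$, Adam can force an odd output of at least $2n-1$ before returning to the initial position, and it is Eve's own forced register choice that restores the bound on the configuration for the next round. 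Your induction needs an invariant of this kind on register configurations to go through; without it, the residue argument in your last paragraph does not close, even after fixing the construction to use two copies.
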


\begin{proof}
Let $H_0$ be the game arena consisting of a single node, belonging to \Adam, with a self-loop of priority $0$. This unique node is also the initial node of $H_0$.

Then, for all $n>0$, the arena $H_{n}$ consists of two distinct copies of $H_{n-1}$ with initial positions $v_0$ and $v_1$
respectively, with an edge $(v_0,v_1)$ of priority $2n-1$ and an edge $(v_1,v_0)$ of priority $2n$. The position $v_0$ is also the initial position of $H_n$. See Figure \ref{fig-family}.

 \Eve wins these parity games $H_n$ because all cycles are dominated by an even priority. We will show that for $n>0$, \Adam has a winning strategy in the $n-1$-register game~$\rge{n-1}{H_n}$. 

We reason inductively, and show that for each $H_n$, $n>0$, that i) from the initial position in $\rge{m}{H_n}$ for $m\geq n-1$, with a register configuration in which register contents are bounded by $2n-1$, \Adam can force the game to output $2n-1$ or a higher odd priority, before returning to the initial position, and ii) \Eve loses in $\rge{n-1}{H_n}$.

\begin{description}
\item[Base case] \Adam has a winning strategy in $\rge{0}{H_1}$: His strategy is to loop once in the current position to see $0$---this sets the register content to $0$; he then moves to the other position and repeats. This causes both $0$ and $1$ to be output infinitely often.

If \Adam uses this strategy in $\rge{m}{H_1}$ for $m\geq 0$, starting from a register configuration in which register contents are bounded by $1$, although he can't win, he can force the game to output $1$ or a higher odd priority before returning to the initial position.

\item[Inductive step] Assume i) and ii) for $H_n$.

i) Consider the following strategy for \Adam in $\rge{m}{H_{n+1}}$ for $m\geq n$. He first moves from $v$, the initial position of $H_{n+1}$ onto the initial position $v'$ of the second component of $H_{n+1}$, via the odd priority $2n+1$.
Then, he plays in $H_n$, which only contains priorities smaller than $2n+1$, with a strategy that is winning in $\rge{n-1}{H_n}$. To counter this strategy, \Eve has to eventually choose a register of index $n$ or higher, after which \Adam returns to the initial position. This is his strategy $\tau_{n}$. Observe that if at the initial position all register contents are bounded by $2n+1$, then \Eve will either lose in the second $H_n$ component, or choose a register of index $n$ or higher when it contains the odd priority $2n+1$, outputting $2n+1$ or a higher odd priority.

ii) We now show that he also has a winning strategy in $\rge{n}{H_{n+1}}$. He begins by playing $\tau_{n}$ until $2n+1$ is output and the play is back at the initial position. Note that some registers now might contain $2n+2$,
so he can not yet repeat $\tau_{n}$. Instead, he  plays a strategy that is winning in $\rge{n-1}{H_n}$ in
the \textit{first} $H_n$ component of $H_{n+1}$. Again, \Eve will lose unless she chooses register $n$. After she has chosen the register $n$, all the registers hold values smaller than $2n+1$. \Adam can then return to the initial position, and again use $\tau_{n}$ to force output $2n+1$. Thus by alternating between using $\tau_{n}$ to force the maximal odd output, and clearing the registers of higher priorities in order to be able to use $\tau_{n+1}$ again, \Adam forces $2n+1$ to be output infinitely often. 

Hence $H_n$ has register-index at least $n$.
\qedhere
\end{description}
\end{proof}

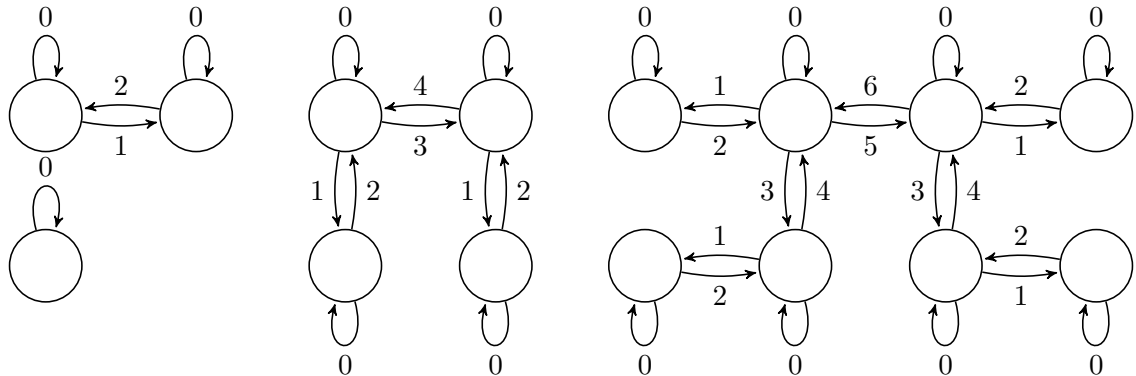
\begin{figure}
\begin{center}

\begin{tikzpicture}[->,>=stealth',shorten >=1pt,auto,node distance=2cm,
                    semithick]
  \tikzstyle{every state}=[fill=none,draw=black,text=black]
  \node[state] (A1)         {};
  \node[state]         (B1) [right of = A1] {};
    \path (A1) edge [loop above] node {$0$} (A1)
        (B1) edge [loop above] node {$0$} (B1)
        (A1) edge [bend right=10] node [below] {$1$} (B1)
        (B1) edge [bend right=10] node [above] {$2$} (A1);

 \node[state] (F) [below of =A1] {};
 \path (F) edge [loop above] node {$0$} (F);
 
  \node[state] (A2)   [right=1cm of B1]                 {};
  \node[state]         (B2) [right of = A2] {};
  \node[state]         (C2) [below of = A2] {};
  \node[state]         (D2) [right of = C2] {};
  
      \path (A2) edge [loop above] node {$0$} (A2)
        (B2) edge [loop above] node {$0$} (B2)
        (C2) edge [loop below] node {$0$} (C2)
        (D2) edge [loop below] node {$0$} (D2)
        (A2) edge [bend right=10] node [below] {$3$} (B2)
        (B2) edge [bend right=10] node [above] {$4$} (A2)
        (A2) edge [bend right=10] node [left] {$1$} (C2)
        (C2) edge [bend right=10] node [right] {$2$} (A2)
        (B2) edge [bend right=10] node [left] {$1$} (D2)
        (D2) edge [bend right=10] node [right] {$2$} (B2);

  \node[state]         (B) [right=7cm of A2] {};
  \node[state] (A)   [left of = B]    {};
  \node[state] (E)         [left of =A] {};
  \node[state]         (C) [below of = A] {};
  \node[state]         (D) [right of = C] {};

  \node[state] (F)         [left of = C] {};
  \node[state] (G)         [right of = B] {};
  \node[state] (H)         [right of = D] {};

  \path (A) edge [loop above] node {$0$} (A)
        (C) edge [loop below] node {$0$} (C)
        (B) edge [loop above] node {$0$} (B)
        (D) edge [loop below] node {$0$} (D)
        (E) edge [loop above] node {$0$} (E)
        (F) edge [loop below] node {$0$} (F)
        (G) edge [loop above] node {$0$} (G)
        (H) edge [loop below] node {$0$} (H)
        (A) edge [bend right=10] node [below] {$5$} (B)
        (B) edge [bend right=10] node [above] {$6$} (A)
        (B) edge [bend right=10] node [left] {$3$} (D)
        (D) edge [bend right=10] node [right] {$4$} (B)
  
        (A) edge [bend right=10] node [left]{$3$} (C)
        (C) edge [bend right=10] node [right]{$4$} (A)
        
        (A) edge [bend right=10] node [above] {$1$} (E)
        (E) edge [bend right=10] node [below] {$2$} (A)
        (C) edge [bend right=10] node [above] {$1$} (F)
        (F) edge [bend right=10] node [below] {$2$} (C)
        
        (B) edge [bend right=10] node [below] {$1$} (G)
        (G) edge [bend right=10] node [above] {$2$} (B)
        (D) edge [bend right=10] node [below]  {$1$} (H)
        (H) edge [bend right=10] node [above] {$2$} (D);
\end{tikzpicture} \caption{$H_0,H_1,H_2,H_3:$ A family of parity games of high register-index }\label{fig-family}

\end{center}
\end{figure}

\subsection{The register-index as a measure of complexity}
Given the elusiveness of a polynomial algorithm for parity games, there is a rich line of research in algorithms that are polynomial for restricted classes of parity games. The size of the priority co-domain is perhaps the simplest way of measuring the complexity of parity games: many parity game algorithms are exponential in the number of priorities, and therefore polynomial on parity games with a fixed number of priorities.
There are also many graph-theoretic restrictions, such as bounded tree-width, bounded clique-width and bounded entanglement, that allow for polynomial algorithms. These two classes of restrictions are orthogonal in the sense that the size of the priority assignment is agnostic to the underlying graph and vice-versa.

In contrast, the register-index can be seen as a measure of complexity that takes into account both the complexity of the priority assignment and the structure of the underlying parity game graph. Since for bounded $k$ the register game over a game $\G$ is just a parity game of polynomial size in $|\G|$, solving parity games of bounded register-index is polynomial. Since we may not know the register-index in advance, by solving both $\rge{k}{\G}$ and $\rgo{k}{\G}$ up to fixed $k$, we obtain a parameterised polynomial algorithm which solves games of register-index up to $k$, and does not return an answer for other arenas.

From the analysis in the previous sections, this algorithm seems likely to be effective on many---perhaps most---reasonable parity games. In particular, parameter $1$ suffices to solve various parity games that are hard for existing algorithms, including other quasi-polynomial algorithms; furthermore, the register-index is independent of measures such as tree-width and entanglement. In this sense, this algorithm is complementary to existing ones. In any case, as we shall see in the next section, a logarithmic parameter suffices to solve \emph{all} finite parity games, making this algorithm quasi-polynomial.

\section{Register Games and Finite Parity Games}\label{sec:ParityGames}

So far we have defined both parity and register games on potentially infinite arenas.
In this section we consider the special case of finite arenas, which is of particular interest for verification and synthesis.
We show that for finite parity games, the register-index is logarithmically bounded in the number of disjoint cycles, and by extension in the size of the game.
We deduce a quasi-polynomial algorithm for solving finite parity games.

\subsection{Logarithmic bound on the register index of finite parity games}
This section presents the key technical result, from which the various results in the sequel follow: the register-index is logarithmic in the number of disjoint cycles in a finite parity game.

For a finite directed graph $G$, we call \emph{\dc} the maximal number of vertex-disjoint cycles in $G$; the \dc of a parity game is that of the underlying directed graph.

In order to build strategies from subgame strategies, we define the notion of a \emph{defensive} strategy. Indeed, an arbitrary winning strategy in a subgame may output a finite number of large odd priorities; this is problematic when this strategy is used in a subgame that a play can enter infinitely often. Defensive strategies avoid high odd priorities from the start.

\begin{defi}[Defensive register-index]
For a subgame $\G$ with priorities bounded by $p$, in which \Eve has a winning strategy, a winning strategy $\strat$ for \Eve in $\rge{k}{\G}$ is \emph{defensive} if, from positions $(\pos,\rt,0)$ where $r_k\geq p$ and $r_k$ is even, a play that agrees with $\strat$ never outputs~$2k+1$.
$\G$ has defensive register-index $k$ if $k$ is the minimal integer such that \Eve has a defensive winning strategy in $\rge{k}{\G}$ from all positions.
\end{defi}

\begin{thm}\label{lem-log}
The register-index $k$ of a finite parity game of \dc $z$ is at most $1+\log z$.
\end{thm}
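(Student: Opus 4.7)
The plan is to proceed by strong induction on the \dc $z$, actually establishing the slightly stronger claim that the \emph{defensive} register-index is bounded by $1 + \log z$. Defensiveness is essential because it lets us safely compose strategies from subgames that may be re-entered infinitely often: an arbitrary winning strategy might emit a transient burst of high odd outputs, which becomes fatal if the play revisits the subgame arbitrarily often. By positional determinacy, it suffices to argue separately for each player's winning region, so fix attention on \Eve's region and her positional winning strategy $\strat$, and work with the strategy subgame $\G_\strat$.

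For the base case $z = 1$, the cycle structure of $\G_\strat$ is rigid enough that a single register suffices: \Eve chooses register $1$ whenever the current priority matches the dominant priority of the unique cyclic component she is in, and register $0$ otherwise. The fact that \Eve wins the underlying parity game guarantees this dominant priority is even, yielding output $2$ infinitely often and nothing higher, matching $1 + \log 1 = 1$.

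For the inductive step, it suffices to bound the defensive register-index of each bottom SCC $S$ of $\G_\strat$, since plays eventually settle in one. In such an $S$ the highest priority $d$ is necessarily even (else \Adam would win). The crux is a combinatorial splitting claim: there exists a set $V^{*} \subseteq S$ of priority-$d$ positions whose removal leaves strongly connected components of \dc at most $\lceil z/2 \rceil$ each. Granted $V^{*}$, the inductive hypothesis furnishes defensive strategies in each component's $(k{-}1)$-register game with $k - 1 \leq 1 + \log \lceil z/2 \rceil \leq \log z$. \Eve's strategy in $\rge{k}{\G}$ then plays $\strat$ for the underlying moves, runs the inductive defensive strategy on registers $0,\dots,k-1$ while inside a component of $S \setminus V^{*}$, and on visiting $V^{*}$ chooses register $k$, yielding output $2k$ since $d$ is even.

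Correctness of this composed strategy follows because any play either visits $V^{*}$ infinitely often---producing $2k$ infinitely often, with defensiveness of the lower strategies preventing any higher output from appearing infinitely often---or eventually remains in one component of $S \setminus V^{*}$, where by induction the maximum output infinitely often is even and at most $2(k-1)$. Defensiveness of the overall strategy is preserved by the same case split. The main obstacle is the splitting claim itself: in a strongly connected game of \dc $z$ one must find a balanced vertex separator consisting of maximum-priority positions. If the full set $V_d$ of priority-$d$ vertices does not itself achieve a $z/2$ balance, the argument must descend through lower priority levels, likely invoking a Menger-style duality between disjoint cycles and vertex cuts while showing that the extra structure is absorbed into only logarithmically many additional registers. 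Threading defensiveness through the recursion---so that the composed strategy never triggers a spurious $2k+1$ output when $r_k$ already holds an even high value---is the technical heart of the argument, and is precisely where the simplified register update mechanism of Remark~\ref{rem:diff} pays off.
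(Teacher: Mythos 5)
Your overall architecture (strengthening to the \emph{defensive} register-index, composing subgame strategies, reserving the top register for the maximal even priority, arguing correctness by the dichotomy ``visits the separator infinitely often vs.\ settles in a component'') matches the paper's proof, but the load-bearing step of your induction is a genuine gap: the balanced splitting claim is false as stated, and you yourself flag that you cannot prove it. There is in general no set $V^{*}$ of priority-$d$ positions whose removal leaves components of \dc at most $\lceil z/2\rceil$. Concretely, in the game of Figure~\ref{fig-1} generalised to $v_0,\dots,v_n$, the top priority $2n$ is attached only to $v_n$; removing it leaves $v_0,\dots,v_{n-1}$ still strongly connected and still containing $n$ vertex-disjoint self-loops, so the \dc drops from $n+1$ only to $n$, nowhere near $\lceil (n+1)/2\rceil$. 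Because your separator is constrained to consist of maximum-priority positions, no Menger-style duality between disjoint cycles and vertex cuts can rescue the claim, and the fallback you sketch---descending through lower priority levels---would have to pay for each level with additional registers, which destroys the logarithmic bound. So the strong induction on $z$ does not go through as designed.

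The paper's proof avoids any balanced separator by inverting the counting. It inducts on the \emph{number of positions} (which strictly decreases when the positions of priority $p-1$ and $p$ are stripped, where $p$ is the even supremum of occurring priorities---even if the \dc barely decreases, as in the example above), takes the maximal strongly connected subgames $G_1,\dots,G_j$ of what remains, with defensive register-indices $k_1,\dots,k_j$ and maximum $m$, and distinguishes cases: if a \emph{unique} subgame attains $m>0$ (or if $m=0$), the defensive register-index of $\G$ is at most $m$ (resp.\ $1$)---Eve reuses the lower registers inside the subgames and chooses the top register exactly when $p$ occurs, with defensiveness guaranteeing that re-entries into the maximal subgame never emit the top odd output; only when \emph{two disjoint} subgames both attain $m$ does the index rise to $m+1$, and then the induction hypothesis gives each of them at least $2^{m-1}$ vertex-disjoint cycles, hence $\G$ has \dc at least $2^{m}$. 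In other words, instead of proving ``removing the top priority halves the \dc'' (false), one proves ``each increment of the register-index doubles the \dc'' (true), which is the only tenable direction, as the chain-like examples show. If you repair your write-up along these lines, your handling of defensiveness and of the separator visits (choose register $k$ on seeing the top even priority, outputting $2k$) can be kept essentially verbatim.
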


\begin{proof}
From the definition of register-index,
it suffices to consider the single-player parity games $\G$ induced by any winning strategy for \Eve in her winning region.
Observe that in the register games on $\G$, all positions with multiple successors belong to \Adam and \Eve's strategy consists of just choosing a register after each move in the underlying game.

We show by induction on the number of positions $n$ in $\G$ that the \emph{defensive} register-index of $\G$ is bounded by $1+ \log z$; the theorem follows. The base case, $n=1$, is trivial.
For the inductive step where $\G$ has at least two positions, let $G'$ be the game induced by positions of $\G$ of priority up to $p-2$, where $p$ is the even supremum of priorities that appear in $\G$. Then, let $G_1,\ldots,G_j$ be the \emph{maximal strongly connected subgames} of $G'$. Let $k_1,\ldots,k_j$ be their respective defensive register-indices, and $m$ the maximal among these.  %Observe that $j\leq z$. 
If there are no such subgames, then $p$ is the maximal priority in all cycles; then Eve wins defensively in $\rge{1}{\G}$ by choosing register $1$ when $p$ occurs and register $0$ otherwise. Since no priority higher than $p$ occurs, $r_1$ always contains either $p$ or the initial value of $r_1$; this strategy is therefore winning and defensive.

\begin{description}
\item[Case of a unique $i$ for which $k_i=m$ and $m>0$]
We show that the defensive register-index of $\G$ is no more than $m$. Since the \dc of $\G_i$ is no larger than $z$, and from the inductive hypothesis $m\leq 1+\log z$, this suffices. 

Eve's strategy in the $m$-register game on $G$ is as follows:  within a subgame $G_j$ she uses the bottom  $k_j$ registers to simulate her defensive winning strategy in the $k_j$-register game on $G_j$; elsewhere, she chooses register $m$ whenever $p$ occurs and register $0$ otherwise.

Assume that this strategy is played from a register-configuration where the top register contains an even priority greater or equal to the largest priority in $\G$. First observe that after choosing register $m$ upon seeing $p$, $r_m$ is set to $p$. Furthermore, this strategy encounters $p$ between any two entrances into $\G_i$, and outside of $\G_i$ \Eve chooses register $0\neq m$ whenever $p$ does not occur; therefore plays that agree with this strategy always enter $G_i$ with either $p$ or the larger even initial $r_m$-value in register $m$. Thus, since the strategy within $\G_i$ is defensive, it never outputs $2m+1$ in $\G_i$. Similarly, when $p$ occurs $\max(r_m,p)$ is always even: $r_m$ is either its initial value, $p$, or a smaller priority from $\G_i$.  Plays that agree with this strategy either eventually stay within a subgame $G_j$ where \Eve follows a winning strategy, or change subgames infinitely often. In the latter case, $p$ is seen infinitely often, thus producing $2m$ as output infinitely often. This strategy is therefore both defensive and winning for \Eve.

Note that if this strategy is played from a register-configuration in which the top register is not an even priority greater or equal to the highest priority in $\G$, then a play might output $2k_m+1$ once, but the values output infinitely often will still be the same as above, so the strategy is still winning.

\item[Case of $i,j$ where $i\neq j$ and $k_i=k_j=m$] 
We show that the defensive register-index of $\G$ is at most $m+1$. This suffices, since by the induction hypothesis, each of $\G_i$ and $\G_j$ has \dc at least $2^{m-1}$; then $\G$ has \dc at least $2^{m}$ as $G_i$ and $G_j$ are disjoint. 
\Eve's strategy in the $m+1$-register game on $\G$ is as follows:  in a subgame $G_i$, she uses registers up to $k_i$ to simulate a winning strategy in the $k_i$-register game on $G_i$; elsewhere, she chooses register $m+1$ when $p$ is seen, and register $0$ otherwise. This strategy is winning as a play either eventually stays within a subgame where \Eve is following a winning strategy, or it enters some subgame infinitely often. In this case, it  must see $p$ infinitely often; then $2(m+1)$ is the highest value output infinitely often since after the first occurrence of $p$ register $m+1$ permanently contains $p$. Furthermore, if the initial content of register $m+1$ is an even priority larger than any occurring in $\G$, then every time \Eve chooses register $m+1$, the output is even; the strategy is therefore defensive.

\item[Case of $m=0$] 
%If  $G$ had \dc  $1$ then the result follows from Lemma~\ref{lem:SingleScc}. 
Eve can win the $1$-register game on $G$, using a strategy as above: within a subgame $\G_i$, she uses her strategy in the $0$-register game, and chooses register $1$ whenever $p$ is seen. This strategy is winning since a play either remains in a subgame and follows a winning strategy, or sees $p$ infinitely often and therefore outputs $2$ infinitely often, but does not output $3$ infinitely often. If initially $r_1$ contains an even priority greater or equal to the maximal priority in $\G$, this strategy never outputs $3$ and is therefore defensive.
\qedhere
%\end{description}
\end{description}
\end{proof}
%\begin{cor}\label{cor-log}
%The register-index $k$ of a finite parity game of size $z$ is at most $1+\log z$.
%\end{cor}

\begin{rem}
The logarithmic bound was shown in~\cite{lehtinen2018modal} with respect to the size of a finite parity game directly; in~\cite{BL18} it was strengthened to a logarithmic bound with respect to the maximal number of disjoint strongly connected components. Here we opt for the number of vertex-disjoint cycles, which is equivalent but more intuitive.
\end{rem}

\subsection{A quasi-polynomial algorithm for parity games}

The quasi-polynomial solvability of parity games then follows: 
to solve a parity game $\G$ of size $n$, one can always solve $\rge{1+\log n}{\G}$ instead.

\begin{cor}[Also \cite{calude2017deciding, jurdzinski2017succinct}]
Parity games are solvable in quasi-polynomial time.
\end{cor}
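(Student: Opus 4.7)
The plan is to apply Theorem~\ref{lem-log} to reduce the problem of solving a finite parity game to the problem of solving a register game with a logarithmic parameter, and then observe that the latter can be solved directly by any algorithm whose complexity is polynomial in the arena size and exponential in the number of priorities.

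Fix a finite parity game $\G$ with $n$ positions. The \dc of $\G$ is trivially bounded by $n$, since any collection of vertex-disjoint cycles uses at most $n$ vertices. Hence by Theorem~\ref{lem-log}, the register-index of $\G$ is at most $k := 1 + \log n$. Setting $k$ as the parameter, we obtain that Eve wins in $\G$ from $\pos$ iff she wins in $\rge{k}{\G}$ from $\pos$, and symmetrically Adam wins in $\G$ from $\pos$ iff he wins in $\rgo{k}{\G}$ from $\pos$. Solving $\G$ therefore reduces to solving these two parity games, which are of the same asymptotic size and priority count.

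Next I would bound the resources of $\rge{k}{\G}$. Its positions are triples $(\pos, \rt, t)$ with $\rt \in I^{k+1}$ and $t \in \{0,1\}$; since $|I| \leq n$ on finite arenas, the number of positions is at most $2n \cdot n^{k+1} = n^{O(\log n)} = 2^{O((\log n)^2)}$. Its priority co-domain is $[0..2k+1]$, which has $O(\log n)$ elements. Any standard parity game algorithm whose running time is polynomial in the number of positions and exponential in the number of priorities, such as Jurdzi\'nski's small progress-measure algorithm~\cite{jurdzinski2000small}, then solves $\rge{k}{\G}$ in time $\bigl(2^{O((\log n)^2)}\bigr)^{O(\log n)} = 2^{O((\log n)^3)}$, and similarly for $\rgo{k}{\G}$. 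This is quasi-polynomial in $n$.

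There is no real obstacle here beyond bookkeeping; the entire technical content lies in Theorem~\ref{lem-log}. The only thing to be slightly careful about is that one does not know the register-index of $\G$ in advance, but since the logarithmic bound is unconditional one can simply solve the register games at parameter $1 + \log n$ for both players and obtain the winning regions of $\G$.
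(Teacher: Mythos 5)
Your proof is correct and follows essentially the same route as the paper: bound the \dc of $\G$ by $n$, apply Theorem~\ref{lem-log} to get register-index at most $1+\log n$, and solve the register game of size $2^{O((\log n)^2)}$ with $O(\log n)$ priorities via the small progress measure algorithm in time $2^{O((\log n)^3)}$. The only (harmless) redundancy is that you also solve $\rgo{k}{\G}$: since \Adam wins $\rge{k}{\G}$ from every position of his parity-game winning region for every $k$, solving $\rge{1+\log n}{\G}$ alone already determines both winning regions.
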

\begin{proof}
Let $n$ be the number of positions in a parity game $\G$ with $d$ distinct priorities and $m$ edges. The game $\rge{k}{\G}$ has $\mathcal{O}(nd^{k+1})$ positions, priorities up to $2k+1$ and $\mathcal{O}((m+nk)d^k)$ edges. Since $\rge{k}{\G}$ is presented with its priorities on its edges, for the complexity analysis we take the size of $\rge{k}{\G}$ to be $O(knd^k)$, to account for an additional vertex for each of the $O(knd^k)$ edges of significant priority---that is, those in $E_r$ for some~$r\in[0..k]$.

From Theorem \ref{lem-log}, the register-index of a parity game of size $n$ is at most $1+\log n$. Therefore solving parity games $G$ reduces to solving $\rge{k}{\G}$ where $k=1+\log n$.
The $\rge{k}{\G}$ game can then be solved with an algorithm exponential in the number of priorities of one's choice, say the small progress measure algorithm~\cite{jurdzinski2000small}, to obtain a quasi-polynomial algorithm that runs in time $2^{\mathcal{O}((\log n)^3)}$.
\end{proof}

Without further optimisations, the complexity of this algorithm is not competitive with respect to existing quasi-polynomial algorithms; its space-complexity, due to building $\rge{1+\log n}{\G}$, is also quasi-polynomial. We refer the reader to Parys's~\cite{Parys20} and to Daviaud, Jurdzi\'nski and Thejaswini's~\cite{strahler} recent work that shows how to bring both the space and time complexity down to match other quasi-polynomial algorithms.
 Furthermore, as discussed in Section \ref{sec:examples}, many games have low register-index, including those that exhibit worst-case complexity for other algorithms; on such games this approach is expected to work well. Section \ref{sec:conclusions} discusses further ideas to adapt this approach for practical usage.

One of the principal appeals of this technique for analysing parity games comes undoubtedly from the logic and automata-theoretic perspective.
In the sequel, we turn our attention to infinite parity games, or at least parity games of unbounded size, which we use to build automata transformations based on register games.

\section{Register Games and Parity Automata}\label{sec:ParityAutomata}

Parity automata are closely related to parity games. On one hand, the language of parity games (represented as infinite trees) with a fixed number of priorities in which Eve has a winning strategy is recognised by an alternating parity tree automaton. On the other hand, the acceptance game for alternating parity automata---both on trees and on words---is an infinite parity game.

In this section we first establish that register games are, like parity games, \emph{automata-definable}: the winning regions of $k$-register games on arenas with $d$ priorities are recognised by an alternating parity tree automaton of size $O(d^{k})$.
We then use register games on \emph{infinite} and \emph{unbounded} arenas to define a quasi-polynomial translation from alternating parity word automata into alternating weak word automata.

We finish with a discussion of how these techniques generalise to restricted classes of tree automata, on which we observe a partial collapse of the parity index hierarchy.\\

We begin this section by fixing notations and definitions for $\omega$-word and tree automata with parity acceptance conditions.% These are automata whose acceptance is decided by a parity game, played on the arena consisting of the product between the input word or tree and the automaton.
%We then describe a parameterised version of such automata, for which the acceptance game is a register game. This allows us to revisit the algorithmic result of the previous section in terms of descriptive complexity and sets the stage for the sequel, in which these parameterised automata form the basis of automata transformations, first on word automata, then on tree automata.

\subsection{Automata on Words, Trees and Games}

\subsubsection*{Words, trees, games and graphs}
A \emph{word} over $\Sigma$ is a (possibly infinite) sequence $w=w_{0}\con w_{1}\cdots$ of letters in $\Sigma$. We write $\Suf(w)$ for the set of suffixes of $w$ (which includes $w$ itself).
We consider a \emph{tree} to be an infinite directed rooted tree in the graph-theoretic sense; in particular, the children of a node are not ordered. A \emph{$\Sigma$-tree} $t$ (resp. $\Sigma$-graph) is a tree (resp. graph) together with a mapping of each of its nodes to a letter in $\Sigma$. We write $\Subtrees(t)$ for the set of $\Sigma$-subtrees of $t$ (which includes $t$ itself). A $\Sigma$-graph with an initial vertex unfolds into an infinite $\Sigma$-tree; a regular $\Sigma$-tree can be finitely represented by a $\Sigma$-graph (i.e., Kripke structures).%) such that there is a surjection $s$ from the vertices of the graph to the subtrees of $t$, that maps to initial vertex to  $t$ itself, and there is an edge from $v$ to $v'$ only if $s(v')$ is a child subtree of $s(v)$; conversely, $t$ is the tree unfolding of a $\Sigma$-labelled graph $S$ if $t$ is represented by $S$.

We will use a \emph{game alphabet} $\Gabc^d=\{A_i,E_i|i\in [0..d]\}$ to represent parity games with priorities up to $d$. A $\Gabc^d$-tree, or graph, is interpreted as the infinite or finite parity game on the arena consisting of the tree, or graph respectively, where positions labelled $E_i$ belong to \Eve, positions labelled $A_i$ belong to \Adam, and positions labelled $E_i$ or $A_i$ have priority $i$.

\subsubsection*{Automata}

Several different definitions of alternating tree automata exist. One variable is the branching type of the input trees: fixed or arbitrary, with or without an order on the successors. In the context of parity games, it is most natural to operate in a setting with arbitrary branching without an order on the successor nodes---that is, an automaton can not distinguish between the left and right branch of a binary tree, unless this is encoded in the labelling. Another difference in automata definitions is how flexible the transition condition is with respect to $\epsilon$-transitions and the combination of path quantifiers ($\Diamond, \Box$) and boolean connectives ($\lor, \land$). Usually, all of these definitions give the same expressiveness (\cite[Proposition 1]{Wil99} and \cite[Remark 9.4]{Kir02}), except for the case of very restricted automata, in which they do not \cite{BS18}. In our definition of alternating automata, there are no epsilon transitions and path quantifiers are applied directly on states.

An \emph{alternating parity tree automaton} is a tuple $\AutTuple{\Sigma,Q, \iota, \delta, \Omega}$ where $\Sigma$ is a finite alphabet, $Q$ is a finite set of states, $\iota\in Q$ is an initial state, $\delta:Q\times\Sigma\to \BP(\{\Diamond,\Box\}\times Q)$ is a transition function, $\BP(X)$ is the set of positive boolean formulas over $X$, and $\Omega:Q\rightarrow I$ is a \emph{priority assignment} that assigns priorities from $I=[0..i]$ or $I=[1..i]$, for some $i\in\Nat$, to states. %A path is accepting if the maximal priority seen infinitely often in it is even.
Intuitively, given a state and a letter, the transition function returns a positive boolean formula that defines which states the
automaton can transition to, and whether it considers the next state at a non-deterministically chosen child ($\Diamond$), or at all of the children ($\Box$). Positive boolean formulas over $\{\Diamond,\Box\}\times Q$ are called \emph{transition conditions}. The transition graph of $\A$ is the graph $(Q,E)$ where $(q,q')\in E$ if $q'$ appears in $\delta(q,\alpha)$ for some $\alpha\in\Sigma$.

\emph{B\"uchi} and \emph{co-B\"uchi} automata are  special cases of parity automata in which $I=\{1,2\}$ and $I=\{0,1\}$ respectively.
An automaton is \emph{weak} if every strongly connected component in the transition graph consists of states with either only odd priorities or only even priorities.
Observe that a weak automaton can be seen as either a B\"uchi or co-B\"uchi automaton by using priorities $1$ and $2$, or $1$ and $0$, respectively.

The \emph{size} of an automaton is the maximum of the alphabet size, the number of states, the number of subformulas in the transition function, and the acceptance condition's \emph{index}, that is $|I|$. Observe that in alternating automata, the difference between the size of an automaton and the number of states in it can stem from a transition function that has exponentially many subformulas.\\ %(In stronger acceptance conditions, not considered here, the index might also be exponential in the number of states.)\\

How exactly nondeterminism in tree automata is defined also varies in the literature. In general, it only concerns the boolean connectives of the transition condition and not the path quantifiers. We consider an alternating automaton to be \emph{nondeterministic} (resp.\ \emph{universal}) if its transition conditions only use the $\lor$ (resp.\ $\land$) connective, in addition to the path quantifiers $\Diamond$ and $\Box$.

A \emph{word automaton} is simply a tree automaton that operates on unary trees, that is, $\omega$-words. For word automata, the path quantifiers $\Diamond$ and $\Box$ are equivalent as there is always exactly one successor. 

We will also mention \emph{deterministic} word automata, in which the transition condition has no boolean connectives, and \emph{safety} automata, in which the only rejecting state is a rejecting sink.

The \emph{class} of an automaton is determined by its transition mode (deterministic, nondeterministic, or alternating), its acceptance condition, and whether it runs on words or trees.
We often abbreviate automata classes by  acronyms in $\{$D, N, A$\} \times \{$W, B, C, P$\} \times \{$W, T$\}$. The first letter stands for the transition mode; the second for the acceptance-condition (weak, B\"uchi, co-B\"uchi, and parity); and the third indicates whether the automaton runs on Words or on Trees. For example, AWW stands for an alternating weak automaton on words.

%It is known that AWWs recognise all $\omega$-regular word languages \cite{Lin88}, while AWTs do \emph{not} recognise all regular tree languages, as they have the same expressiveness as alternation-free $\mu$-calculus (AFMC) \cite{MS95}.

We define the semantics of automata directly in terms of their acceptance games.

\begin{defi}[Acceptance game]\label{def:ModelCheckingGame}
Given a $\Sigma$-tree $t$ and an APT $\A=\AutTuple{\Sigma,Q,\iota,\delta,\Omega}$, the acceptance game $\game{t}{\A}$ is the following parity game:
\begin{itemize}
\item The set of positions is $\Subtrees(t) \times (Q \cup \BP(\{\Diamond, \Box\} \times Q)) $.
\item For $a\in \Sigma$, a $\Sigma$-tree $u$ whose root is labeled $a$, transition conditions $b$ and $b'$, and state $q\in Q$, there is an edge from:
\begin{itemize}
%\vspace{-3mm}
%\begin{multicols}{2}
\item $(u,q)$ to $(u, \delta(q,a))$
\item $(u,b\vee b')$ to $(u,b)$ and $(u, b')$
\item $(u, b\wedge b')$ to $(u,b)$ and $(u, b')$
\item $(u,\Diamond q)$ to $(u',q)$, for every child $u'$ of $u$.
\item $(u,\Box q)$ to $(u',q)$, for every child $u'$ of $u$.
%\item $(\True,u)$ to $(\True,u')$, for every child $u'$ of $u$.
%\item $(\False,u)$ to $(\False,u')$, for every child $u'$ of $u$. 
%\end{multicols}
%\vspace{-2.5mm}
\end{itemize}
\item Positions $(\Box q,u)$ and $(b\wedge b',u)$ belong to Adam; other positions belong to Eve.
\item A position $(u,b)$ is of priority $\Omega(b)$ if $b$ is a state $q$, 
%$1$ if 
%$b=\False$, 
and 0 otherwise.
\end{itemize}
Observe that the  acceptance games of regular, i.e., finitely representable, trees are finite since these trees only have finitely many distinct subtrees. For a Kripke structure $S$, we write $\game{S}{\A}$ for the acceptance game $\game{t}{\A}$ where $t$ is the tree represented by $S$, and observe that it is finite.
\end{defi}

We say that an APT $\A$ with initial state $\iota$ accepts a tree $t$ if and only if Eve has a winning strategy in the acceptance game $\game{t}{\A}$ from $(t,\iota)$. A tree automaton accepts or rejects a graph according to whether it accepts or rejects its infinite tree unfolding. The set of trees accepted by $\A$ is called the language of $\A$, denoted by $L(\A)$. Two automata are equivalent if they recognise the same language.

As a word is a special case of a tree, Definition~\ref{def:ModelCheckingGame} and other results that will be shown on acceptance games apply also to word automata. In the word setting, the presentation of the acceptance game can be simplified: $\Diamond q$ and $\Box q$ are equivalent, and may thus be written~$q$, and $Q \cup \BP(Q)$ is equal to $\BP(Q)$.

\subsection{Automata for Register Games}
We define for every APT $\A$ and positive integer $k$, the parameterised version $\A_k$, which is an APT that will be shown to accept a tree $t$ if and only if Eve wins the $k$-register game on $\game{t}{\A}$ starting from $(t,\iota)$. The idea is to emulate the $k$-register game by keeping track of register configurations with a tuple $\rt \in I^{k+1}$ that is updated according to which priorities are seen and Eve's register choices, which are represented as nondeterministic choices in $\A_k$. The outputs are captured by the priorities of the states of $\A_k$. Here we note a slight subtlety: In the $k$-register game on $\game{t}{\A}$, Eve chooses registers not only at positions $(u,q)$ where $q$ is a state of $\A$, but also at positions $(u,b)$ where $b$ is a boolean formula. In $\A_k$ only states can have priorities (i.e., there can only be one priority per move in $t$) so we aggregate the outputs from these choices between two states by taking the largest output into the priority of the next state---this is the third element $p\in [0..2k+1]$ of the states of $\A_k$.

\begin{defi}
Given an APT $\A=\AutTuple{\Sigma,Q,\iota,\delta,\Omega}$ with $\Omega:Q\rightarrow I$ and a positive integer $k$, we define an APT $\A_k=\AutTuple{\Sigma,Q',\iota',\delta',\Omega'}$ as follows:
\begin{itemize}
\item $Q' = Q\times I^{k+1} \times [0..2k+1]$
\item $\iota' = (\iota,(0,..,0),0)$
\item $\Omega'$: For every $q\in Q, \rt \in I^{k+1}$, and $p\in[0..2k+1]$, we have $\Omega'(q,\rt,p)=p$.

\item $\delta'$: For every $q\in Q, \rt \in I^{k+1}, p\in[0..2k+1]$, and $a\in\Sigma$, we have $\delta'((q,\rt,p),a) = \bigvee_{i\in [0..k]}\move(\delta(q,a),\update_i(\rt,\Omega(q)),\max(r_i,\Omega(q)))$ where for transition conditions $b,b'$:
\begin{itemize}
\item
$\update_i(\rt,p)= \rt'$ where $r_j'=\max(r_j,p)$ for $j>i$, $r_i=p$ and $r_j=0$ for $j<i$;
\item $\move(\Diamond q',\rt,p)=\bigvee_{i\in [0.. k]} \Diamond (q', \update_i(\rt,0),m)$ where $m=2i$ if $\max(r_i,p)$ is even; $2i+1$ otherwise;
\item $\move(\Box q',\rt,p)=\bigvee_{i\in [0.. k]} \Box (q', \update_i(\rt,0),m)$
where $m=2i$ if $\max(r_i,p)$ is even; $2i+1$ otherwise;
\item $\move(b \wedge b',\rt,p)=\bigvee_{i\in [0..k]} \move(b, \update_i(\rt,0),\max(r_i,p))\wedge \move(b', \update_i(\rt,0),\max(r_i,p))$
\item $\move(b \vee b',\rt,p)=\bigvee_{i\in [0..k]} \move(b, \update_i(\rt,0),\max(r_i,p))\vee \move(b', \update_i(\rt,0),\max(r_i,p))$
\end{itemize}

\end{itemize}

\end{defi}

\begin{lem}\label{lem-Ak}
Given an APT $\A$ and a positive integer $k$, the parameterised APT $\A_k$ accepts a tree $t$ if and only if Eve wins the $k$-register game on $\game{t}{\A}$ from $(t,\iota)$.
\end{lem}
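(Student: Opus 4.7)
The plan is to establish a bijective correspondence between positional strategies in $\game{t}{\A_k}$ and positional strategies in the $k$-register game on $\game{t}{\A}$ that preserves the winner. Since both games are parity games and hence positionally determined, this suffices.

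First, I would spell out the correspondence at the level of positions and moves. A position $(u, (q, \rt, p))$ of $\game{t}{\A_k}$ reached at a state-visit corresponds to a position of the register game whose underlying acceptance-game position is $(u, q)$ with register configuration $\rt$; the third coordinate $p$ records the priority about to be output on entry to this state. Eve's choice of $i \in [0..k]$ in the outer disjunction of $\delta'((q,\rt,p),a)$ corresponds to her register choice at the state-position of the register game, and her disjunctive choices inside $\move$ correspond to her register choices at the subsequent boolean-formula positions of the register game. The helper function $\update_i$ implements the register update of Definition \ref{D:register-game} verbatim, Adam's $\land$- and $\Box$-positions in $\A_k$ match Adam's analogous positions in the underlying acceptance game, and the third argument carried through $\move$ tracks the running priority used to compute the next output.

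Next, I would verify by induction on the boolean-formula structure of $\delta(q,a)$ that the priority $m$ assigned by $\A_k$ to the next state-visit $(q', \rt^{\text{new}}, m)$ equals $2i^*$ or $2i^* + 1$, where $i^*$ is the \emph{final} register choice at the $\Diamond q'$ or $\Box q'$ leaf and the parity is determined by the maximum of all outputs produced in the register game along the corresponding window between two consecutive state-visits. Since all other positions of $\game{t}{\A_k}$ carry priority $0$, the priorities appearing infinitely often in a play of $\game{t}{\A_k}$ are exactly this sequence of $m$-values at state-visits; taking a max over each finite window does not alter the $\limsup$-parity winning condition, so Eve wins $\game{t}{\A_k}$ from $(t,\iota')$ iff Eve wins the $k$-register game on $\game{t}{\A}$ from $(t,\iota)$.

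The hard part will be the inductive step just sketched. The third argument $p$ to $\move$ is updated to $\max(r_i, p)$ at every $\Diamond$, $\Box$, $\land$, and $\lor$ node, while the register configuration is simultaneously reset via $\update_i(\rt, 0)$ at each such step; showing uniformly across all four connectives that this really does record the maximum output produced in the corresponding register-game window---while remembering that boolean-formula positions of the register game carry underlying priority $0$ whereas state-positions carry priority $\Omega(q)$---requires careful bookkeeping. Once the play-level correspondence is established, translating strategies in both directions is routine: a positional strategy in one game reads off as a positional strategy in the other by following the matched positions and register choices.
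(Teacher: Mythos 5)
Your overall route is the same as the paper's: a move-by-move correspondence between $\game{t}{\A_k}$ and the $k$-register game on $\game{t}{\A}$, matching register choices to the disjunctions over $[0..k]$ and aggregating the outputs of each window between two state-visits into the priority of the next state. The gap is in your key invariant, which is stated in two mutually inconsistent ways, and the version that is actually true does not support your final step. As you correctly observe, the priority $m$ installed at the next state-visit has index given by the \emph{final} register choice $i^*$ of the window; its parity bit is the parity of the maximum of $\Omega(q)$ and the contents of all registers chosen during the window (which, using the invariant $r_0\leq\cdots\leq r_k$ --- preserved by $\update_i$ and true of the initial all-zero configuration, and which your induction should make explicit --- does coincide with the parity of the window's maximal output). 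But $m$ is \emph{not} ``the max over the window'': its index is the last choice, not the largest one, so $m$ can be strictly smaller than the window's maximal output while agreeing with it in parity. Per-window parity agreement is not enough, because the $\limsup$ compares values \emph{across} windows. Concretely, take states of priorities $6$, $0$, $3$ on a cycle with $\delta$ a single $\Diamond$-atom at each, and let Eve choose register $3$ at each of the first two states and register $0$ at the corresponding $\Diamond$-leaves, and register $1$ twice in the third window: the register game outputs $6,0,6,0,3,3$ forever (largest output seen infinitely often is $6$, so Eve wins this play), whereas the corresponding $\A_k$-priorities are $0,0,3$ forever (largest priority seen infinitely often is $3$, so Adam wins in $\game{t}{\A_k}$). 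The register configurations recur, so this is a genuine infinite play, and it shows the choice-by-choice play bijection does not preserve the winner; a symmetric phenomenon (a high odd output collapsing to a small odd $m$) obstructs the converse direction as well.

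So the step you defer as ``careful bookkeeping'' hides a missing idea rather than routine verification: the naive translation of strategies fails in both directions, and one must either normalise Eve's strategies --- for instance, argue that she may without loss of generality make the maximal register choice of each window her final choice, after which $m$ really is the window's largest output and your window-aggregation argument for the $\limsup$-parity becomes sound --- or argue at the level of winning strategies rather than arbitrary plays. To be fair, the paper's own proof is terse at exactly this point (it asserts that the largest output between two states ``is recorded via the third element''), so your more explicit formulation usefully exposes where the real work lies; but as written, your proposal inherits this gap rather than closing it.
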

\begin{proof}
Recall that $\A_k$ accepts a tree $t$ if and only if Eve wins the parity game $\game{t}{\A_k}$ from the initial position, which consists of $\A_k$'s initial state and $t$'s root. Thus, we should show that Eve wins $\game{t}{\A_k}$ from the initial position if and only if she wins the $k$-register game on $\game{t}{\A}$ from the initial position. The intuition for the equivalence between the games is that $\game{t}{\A_k}$ encodes the $k$-register game on $\game{t}{\A}$ by having the register-configuration in the state space, Eve's register choices as new disjunctions, and the highest output between two states as priorities.

Positions of $\game{t}{\A_k}$ are in 
$$ \Subtrees(t)\times \Big((Q\times I^k \times [1..2k+1]) \cup \BP(\{\Diamond, \Box\} \times Q\times I^k \times [1..2k+1])\Big) $$
while positions of the $k$-register game on $\game{t}{\A}$  are in
$\Subtrees(t)\times \Big(Q \cup \BP(\{\Diamond, \Box\} \times Q)\Big) \times   I^k$.

$\game{t}{\A_k}$ begins at $(t,(\iota,(0,..,0),0))$ while the $k$-register game on $\game{t}{\A}$ begins at $(t,\iota,(0,..,0))$.

Now, observe that in the $k$-register game on $\game{t}{\A}$ at position $(u,q,\rt)$, for a state $q\in Q$ and a $\Sigma$-tree $u$ whose root is labeled $a$, Eve has to choose a register $i\in [0.. k]$ before the parity game proceeds to $(u,\delta(q,a))$ with register configuration $\rt'$, which is exactly $\update_i(\rt,\Omega(q))$. Similarly in $\game{t}{\A_k}$, from $(u,q,\bar x,p)$ Eve chooses $i\in [0..k]$ before proceeding with the transition condition $\delta(q,a)$ with register configuration $\update_i(\rt,\Omega(q))$. In the register game on $\game{t}{\A}$, this produces an output $2i$ if $\max(r_i,\Omega(q))$ is even; $2i+1$ otherwise. In $\game{t}{\A_k}$ the value $\max(r_i,\Omega(q))$ is kept in the state of the successor position.

When the games proceed, in both the register game on $\game{t}{\A}$ and in the parity game on $\game{t}{A_k}$, the decision falls to \Adam if $\delta(q,a)$ is a conjunction or prefixed by $\Box$ and to \Eve otherwise. Then, between each subformula of the transition condition, Eve again has a choice of register in both games. In the register game on $\game{t}{\A}$, this produces an output $2i$ if $\max(r_i,0)$ (that is $r_i$) is even; $2i+1$ otherwise. The largest output between two positions in $Q\times \Subtrees(t)$ of $\game{t}{\A}$ is recorded via the third element of the automata-component of $\game{t}{\A_k}$, which determines the priority of the successor state.

Then a winning strategy for Eve in one game translates into a winning strategy for Eve in the other game by mapping \Eve's choices of registers in the register game on $\game{t}{\A}$ to her choices in $\game{t}{\A_k}$ at disjunctions over $[0..k]$ in the transition condition of $\A$ and vice versa, and mapping her choices in the underlying parity game $\game{t}{\A}$ to the remaining choices in $\game{t}{\A_k}$, and vice-versa.
\end{proof}

\begin{rem}
The $\mu$ calculus-minded reader can get an idea of the equivalent modal $\mu$-calculus formula from~\cite{lehtinen2018modal} where the definability of register games was presented from a logic-perspective, rather than in terms of automata. The following proof in particular uses the automata-equivalent of the canonical modal $\mu$-calculus formulas that witness the strictness of the alternation hierarchy~\cite{bradfield98strict}.
\end{rem}

%By combining Theorem \ref{lem-log} and Lemma \ref{lem-Ak}, we can state the quasi-polynomial complexity of parity games in terms of \emph{desciptive complexity}.

\begin{cor}
There is a parity automaton of size $d^{O(\log n)}$ that, over $\Gabc^d$-graphs of size $n$ and priorities up to $d$, recognises the parity games in which Eve has a winning strategy.
\end{cor}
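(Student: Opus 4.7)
The plan is to realise the desired automaton as $\A_k$ for $k = 1+\lceil\log n\rceil$, where $\A$ is a small game-playing APT over $\Gabc^d$ and $\A_k$ is the parameterised register-game automaton of Lemma~\ref{lem-Ak}. First I would construct $\A$ with state set $\{q_0,\ldots,q_d\}$, priorities $\Omega(q_i)=i$, initial state $q_0$, and transitions $\delta(q_i,E_j) = \Diamond q_j$ and $\delta(q_i,A_j) = \Box q_j$ (so $\A$ has size $O(d)$). In the acceptance game $\game{t}{\A}$ on a $\Gabc^d$-tree $t$, along each play the state positions $(u,q_j)$ contribute priority $j$ equal to the label index of $u$'s parent, while the $\Diamond$/$\Box$ branching assigns control to \Eve/\Adam in agreement with the $E$/$A$ label of $u$. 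The priority sequence visited is therefore exactly the sequence of labels of the encoded parity game, shifted by one, preserving the limsup parity condition; hence $\A$ accepts $t$ iff \Eve wins the parity game encoded by $t$ from its root.

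Next I would apply the parameterised construction of Lemma~\ref{lem-Ak} to form $\A_k$, which accepts $t$ iff \Eve wins the $k$-register game on $\game{t}{\A}$. For a $\Gabc^d$-graph of size $n$, the encoded parity game is finite with at most $n$ positions, so Theorem~\ref{lem-log} bounds its register-index by $1+\log n$. Taking $k = 1+\lceil\log n\rceil$ thus guarantees that \Eve wins the $k$-register game iff she wins the original parity game. The bound is stated for the finite graph, but it lifts to its tree unfolding used in the acceptance game: by positional determinacy of the finite $k$-register game on the graph, a positional winning strategy unfolds to a winning tree strategy, and any winning strategy on the infinite unfolding can be folded back to a positional strategy on the graph. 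Consequently $\A_k$ recognises exactly those $\Gabc^d$-graphs of size $n$ on which \Eve has a winning strategy.

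Finally I would tally sizes. The state set $Q\times I^{k+1}\times[0..2k+1]$ of $\A_k$ has $(d{+}1)^{k+2}(2k{+}2)$ elements, the transition function expands into $O(k^2)$ subformulas per state-letter pair, the index is $2k+2$, and the alphabet is $\Gabc^d$ of size $O(d)$. For $k = O(\log n)$ each component lies in $d^{O(\log n)}$, yielding the claimed bound. The main step to verify carefully is the faithful correspondence between $\game{t}{\A}$ and the parity game on $t$: the shift-by-one in the priority sequence is harmless for the parity condition but should be checked, and it must be confirmed that the unique $\Diamond/\Box$ atom emitted by each transition places control with the player owning the current game node, so that Eve's winning regions in the two games coincide.
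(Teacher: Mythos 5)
Your proposal is correct and takes essentially the same route as the paper: the base automaton you build is exactly the paper's $\P^d$, and the paper likewise concludes by applying Theorem~\ref{lem-log} and Lemma~\ref{lem-Ak} to get $\P^d_{1+\log n}$, with the same size accounting. The two subtleties you flag (the one-step priority shift and passing between the graph and its unfolding) are handled in the paper by observing that $\game{t}{\P^d}$ is identical to $t$ once unique-successor positions are collapsed and that acceptance games of regular trees are finite, so your positional-determinacy argument is just a slightly more explicit treatment of the same points.
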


\begin{proof}
Let $\P^d$ be the alternating parity automaton on $\Gabc^d$-trees that, over parity game graphs with priorities up to $d$, recognises parity games with priorities up to $d$ in which \Eve has a winning strategy. This is simply an automaton with states $0,1\dots d$ where the priority of a state $i$ is $i$ and the transition condition is $\delta(q,E_i)=\Diamond i$ and $\delta(q,A_i)=\Box i$. 
The acceptance game $\game{t}{\P^d}$ for any $\Gabc^d$-tree $t$ is identical to $t$ once we collapse positions with a unique successor. It follows that $t$ and $\game{t}{\P^d}$ have identical register-index. Then, from Theorem \ref{lem-log} and Lemma \ref{lem-Ak}, over $\Gabc^d$-graphs of size up to $n$ and priorities up to $d$, $\P^d_{\log n+1}$ recognises the same language as $\P^d$, which recognises the games in which Eve has a winning strategy.
%On parity games of bounded size, a finite automaton suffices: one can further reduce the number of priorities of the automaton by allowing each odd (even) priority to be visited only $n$ times without a higher priority occurring before transitioning to the rejecting (accepting) sink. Since the number of priorities in $\P^n_{\log n +1}$ is logarithmic in $n$, this only adds a factor of $n^{\log n}$ to the state-space of the automaton.
\end{proof}

This corollary casts the quasi-polynomial complexity of parity games in terms of their \textit{descriptive complexity}---that is,  in terms of the complexity of the logical formalisms able to capture their winning regions. Note that here we only consider the complexity of recognising parity games \emph{with a bounded number of priorities} (here bounded by the size of the game) in which \Eve has a winning strategy; for parity games with an unbounded number of priorities, this task is beyond the expressivity of parity automata and the $\mu$ calculus~\cite{dawar2008descriptive}.

\begin{rem}
Observe that the register-index is bisimulation-invariant.
Indeed, %$\P^d_k$ accepts a parity game $\G$ with priorities up to $d$ whenever \Eve  wins $\rge{k}{\G}$. Dually, we can define $\texttt{co}\P^d_k$ that accepts parity games $\G$ with priorities up to $d$ whenever \Adam wins $\rgo{\G}{k}$ (note that this is not the complement of $\P^d_k$). Recall that a parity game has register-index at most $k$ if $\rgo{\G}{k}$ and $\rge{\G}{k}$ have the same winner.
 we can build an automaton that accepts exactly parity games with priorities up to $d$ that have register-index at most $k$.
Since alternating parity automata, like formulas of the modal $\mu$-calculus, are bisimulation invariant, so is the register-index.  
\end{rem}

\subsection{Register Games and Word Automata}\label{sec:words}

We turn our attention to automata over $\omega$-words. On words, deterministic parity automata, non-deterministic B\"uchi automata and alternating weak automata all are expressive enough to capture all $\omega$-regular languages~\cite{Lin88}. %While the cost of determinisation of alternating automata is known to be double-exponential, 
However, our understanding of the trade-offs in conciseness between different acceptance conditions is incomplete.  For instance, an ABW can be turned into an AWW with quadratic size blow-up~\cite{KV01c}, but the current corresponding lower bound is at $\Omega(n\log n)$. Until recently the best translation from APW into AWW was exponential~\cite{KV98b}; here we improve it to quasi-polynomial; the lower bound remains at $\Omega(n\log n)$.

In this section we use register games to define a quasi-polynomial translation from APW to AWW. It is based on observing that the acceptance parity games of an automaton on an ultimately periodic word have a register-index that depends solely on the automaton and is at most logarithmic in its size. Given that equivalence over ultimately periodic words implies equivalence over all words~\cite{mcnaughton1966testing}, this will suffice to turn APW  into equivalent APW with a logarithmic number of priorities and of quasi-polynomial size; from there, Kupferman and Vardi's classic translation into weak automata~\cite{KV98b} produces a weak automaton which is also of only quasi-polynomial size.\\

We begin by showing that the \dc of $\game{t}{\A}$, for a tree $t$ that represents the computations of a finite Kripke structure with \textit{minimum feedback vertex set} size at most $d$ in each of its (maximal) strongly connected components (SCC), is linear in $d|\A|$. Then, since the register index is logarithmic in the \dc, it is not the number of states in a Kripke structure that influences the register index of the acceptance game, but rather its minimum feedback vertex set. 
 For Kripke structures representing ultimately periodic words, this measure is $1$. While a weaker lemma based directly on the single cycle of a Kripke structure representing an ultimately periodic word $w$ would suffice, the next section will use this stronger lemma.

\begin{defi}
Given a directed graph $G$, a feedback vertex set is a set of vertices that contains at least one vertex of every cycle in $G$.
Let the \emph{fvs-size} of $G$ be the size of a minimum feedback vertex set of $G$.
\end{defi}

%Generalising Lemma \ref{lem-lasso-ri},

\begin{lem}\label{lem-fvs}
Given a Kripke structure $S$ with fvs-size up to $d$ in each of its SCCs and an APT $\A$ with $n$ states, the parity game $\game{S}{\A}$ has register index at most $1+\log dn$.
\end{lem}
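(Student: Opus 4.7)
The plan is to apply Theorem~\ref{lem-log} after bounding the \dc of each SCC of $\game{S}{\A}$ by $dn$, then lift this to the whole game.

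First I would observe that every cycle in $\game{S}{\A}$ must traverse a state-position $(u,q)$ with $u$ a subtree (node) of $S$ and $q\in Q$, since within a single transition step the positions corresponding to boolean subformulas form a finite acyclic tree and cannot by themselves close a cycle. Between two consecutive state-positions on a cycle, the first component $u$ moves along an edge of $S$, so projecting a cycle onto its first component yields a closed walk in $S$, which must lie in a single SCC $S_i$. Hence every SCC $T$ of $\game{S}{\A}$ projects into one SCC $S_i$ of $S$.

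Next I would use the feedback-vertex-set hypothesis. Let $F_i\subseteq S_i$ be a minimum feedback vertex set of $S_i$, so $|F_i|\le d$. Then $(F_i\times Q)\cap T$ is a feedback vertex set for $T$: every cycle in $T$ projects to a cycle in $S_i$ that hits $F_i$ and therefore contains a state-position in $F_i\times Q$. Its size is at most $dn$, so the \dc of $T$ is also at most $dn$, since $k$ vertex-disjoint cycles require at least $k$ vertices in any feedback vertex set. Theorem~\ref{lem-log} applied within $T$ then gives register-index at most $1+\log dn$ for $T$.

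Finally I would lift the bound to the whole game. From any starting position, every infinite play under a positional Eve strategy is eventually trapped inside a single SCC of the strategy's subgame, and the finitely many register outputs produced during the transient prefix cannot affect the parity of the infinite play. Eve's per-SCC winning $k$-register strategies can therefore be stitched together into a winning $k$-register strategy from any starting position with $k=1+\log dn$. The main obstacle is precisely this stitching step: one must ensure that entering a terminal SCC with arbitrarily-valued registers (inherited from the transient) does not break the SCC-local winning strategy. This is exactly what the \emph{defensive} register-index refinement in the proof of Theorem~\ref{lem-log} is tailored for---a defensive strategy tolerates a large even priority sitting in the top register upon entry---so I would appeal to the defensive version of the per-SCC bound rather than the plain one.
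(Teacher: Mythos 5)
Your proposal is correct and takes essentially the same approach as the paper: you project cycles of $\game{S}{\A}$ into a single SCC of $S$, observe that every cycle must pass through a state-position whose first component lies in the feedback vertex set, bound the \dc of each SCC by $dn$, and finish with Theorem~\ref{lem-log} plus the SCC-stitching step (which the paper asserts in one line and you justify explicitly via defensiveness and the configuration-independence of register-game strategies). The only cosmetic difference is that you exhibit $(F_i\times Q)\cap T$ as a feedback vertex set of the game graph directly, whereas the paper performs the same $|F|\cdot n$ counting in an auxiliary graph $H$ obtained by collapsing the boolean-formula positions.
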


\begin{proof}
We first note that the register index of a parity game is the maximal one among its SCCs.
We show that the dc-size of any SCC in $\game{S}{\A}$ is at most $dn$. Then, by Lemma~\ref{lem-log}, its register index is at most $1+\log dn$.

First note that every SCC of $\game{S}{\A}$ results from a single SCC of $S$ and a single SCC of $\A$. It is therefore enough to assume that both $S$ and $\A$ consist of a single SCC.

We consider the graph $H$ that is derived from $\game{S}{\A}$ by ignoring the intermediate positions $(v,b)$, where $b$ is a boolean formula between positions of the form $(v,q)$, for a state $q$. That is, let $H$ be the graph consisting of just the vertices $(v,q)$ of $\game{S}{\A}$, where $q$ is a state. The edges of $H$ connect positions $(v,q)$ and $(v',q')$ if $(v',q')$  is reachable from $(v,q)$ in $\game{S}{\A}$ directly, that is, with a path which does not visit yet another position $(v'',q'')$ where $q''$ is a state.

Observe that $\game{S}{A}$ has \dc no larger than that of the graph $H$: a set of disjoint cycles in $\game{S}{A}$ induces a set of disjoint cycles in $H$.

Let $F$ be a feedback vertex set of $S$, having up to $d$ vertices. Since every cycle $C$ in $H$ corresponds to some cycle of $S$, it must contain a vertex $(v,q)$, such that $q$ is a state of $\A$ and $v\in F$. Hence, for every $v\in F$, there are up to $n$ vertex-disjoint cycles in $H$ that have a vertex of the form $(v,q)$. Therefore, there are up to $|F|n \leq dn$ disjoint cycles in $H$, and therefore in $\game{S}{\A}$.
\end{proof}

In particular, given an ultimately periodic word and an APW $\A$ with $n$ states, since the fvs-size of a lasso Kripke structure representing a word is $1$, the parity game $\game{w}{\A}$ has register-index at most $1+\log n$.
Then $\A$ is equivalent to its $1+\log n$ parameterised version. On can then obtain an equivalent weak automaton of quasi-polynomial size by applying Kupferman and Vardi's transformation~\cite{KV98b} to $\A_{1+\log n}$ instead of $\A$.

\begin{lem}\label{lem-eq}
Every APW $\A$ is equivalent to its parameterised version $\A_k$, for $k=1+\log |\A|$.
\end{lem}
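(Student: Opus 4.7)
The plan is to prove the two inclusions $L(\A_k)\subseteq L(\A)$ and $L(\A)\subseteq L(\A_k)$, using register game properties for one direction and Lemma~\ref{lem-fvs} together with McNaughton's theorem for the other.

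For the easy direction, $L(\A_k)\subseteq L(\A)$, I would fix an arbitrary word $w$ and suppose $\A_k$ accepts $w$. By Lemma~\ref{lem-Ak}, Eve wins the $k$-register game on $\game{w}{\A}$ from the initial position. Now invoke the first of the two lemmas on register games in Section~\ref{sec:RegisterGames}: if Adam had a winning strategy in the parity game $\game{w}{\A}$, he would also win the $k$-register game, contradicting Eve's win. By positional determinacy, Eve wins $\game{w}{\A}$, so $\A$ accepts $w$. Note that this inclusion holds on all words, not just ultimately periodic ones.

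For the converse direction, $L(\A)\subseteq L(\A_k)$, I would first prove it on \emph{ultimately periodic} words. Take $w=uv^{\omega}$ accepted by $\A$. The natural lasso-shaped Kripke structure $S_w$ representing $w$ has a single SCC (the loop) whose fvs-size is $1$. Applying Lemma~\ref{lem-fvs} with $d=1$ and $n=|\A|$, the parity game $\game{S_w}{\A}=\game{w}{\A}$ has register-index at most $1+\log|\A|=k$. Since Eve wins this parity game, she wins the $k$-register game on $\game{w}{\A}$ as well. By Lemma~\ref{lem-Ak}, $\A_k$ accepts $w$.

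To extend this to all words I would appeal to the classical result that two $\omega$-regular languages coincide if and only if they agree on ultimately periodic words~\cite{mcnaughton1966testing}. Both $L(\A)$ and $L(\A_k)$ are $\omega$-regular (as APWs, they recognise $\omega$-regular languages), the inclusion $L(\A_k)\subseteq L(\A)$ holds universally, and the reverse inclusion holds on ultimately periodic words; hence equality holds on all words. The only mildly delicate step is verifying that the lasso Kripke structure for a single ultimately periodic word really does have fvs-size $1$ per SCC, which is immediate since deleting the single state where the loop closes breaks every cycle; once this is observed, the rest follows from results already established.
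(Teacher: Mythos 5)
Your proposal is correct and follows essentially the same route as the paper's proof: both rest on Lemma~\ref{lem-Ak}, the observation that the lasso Kripke structure $S_w$ of an ultimately periodic word has fvs-size $1$ so that Lemma~\ref{lem-fvs} bounds the register-index of $\game{S_w}{\A}$ by $k$, and McNaughton's theorem to lift agreement on ultimately periodic words to full equivalence. Your only deviation is making the inclusion $L(\A_k)\subseteq L(\A)$ explicit on all words via Adam's lemma and determinacy---a valid refinement, but redundant once McNaughton's theorem is invoked, since the paper's register-index bound already yields the biconditional on ultimately periodic words.
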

\begin{proof}
As two $\omega$-regular languages are equivalent if they agree on all ultimately periodic words \cite{mcnaughton1966testing}, it suffices to argue that $\A$ and $\A_k$ agree on ultimately periodic words.

From Lemma \ref{lem-Ak}, $\A_k$ accepts an ultimately periodic word $w$ if and only if if Eve wins the $k$-register game on $\game{S_w}{\A}$ where $S_w$ is a finite Kripke structure representing $w$. Since $S_w$ is a lasso, and it has fvs-size $1$. From Lemma~\ref{lem-fvs}, Eve wins the $k$-register game on $\game{S_w}{\A}$ exactly when Eve wins the parity game on $\game{S_w}{\A}$, that is, when $\A$ accepts~$w$.
\end{proof}

\begin{thm}\label{thm:qp}
There is a translation of alternating parity word automata into alternating weak word automata incurring at most a quasi-polynomial size increase. In particular, every APW $\A$ of size (resp.\ number of states) $n$ is equivalent to an AWW of size (resp.\ number of states) $2^{O((\log n)^3)}$.
\end{thm}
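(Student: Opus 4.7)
The plan is to chain together the parameterised construction from Lemma~\ref{lem-eq} with a classical translation from APW to AWW. Starting from an APW $\A$ of size $n$, I first invoke Lemma~\ref{lem-eq} to obtain an equivalent APW $\A_k$ with $k=1+\log n$. The key advantage of $\A_k$ over $\A$ is that its parity index is only $2k+2=O(\log n)$, even though its state space may grow.

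Next I would estimate the size of $\A_k$. By the definition of the parameterised automaton, its states lie in $Q \times I^{k+1} \times [0..2k+1]$, so the number of states is bounded by $n \cdot n^{k+1} \cdot (2k+2) = n^{O(\log n)} = 2^{O((\log n)^2)}$; a similar bound applies to the alphabet and the transition function, since the latter is a disjunction of at most $k+1$ transition conditions, each obtained by applying $\move$ (which preserves the structure of the original formula up to an outer disjunction over register choices) to a subformula of $\delta$. Thus the overall size of $\A_k$ is in $2^{O((\log n)^2)}$, while its index is $O(\log n)$.

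I then apply Kupferman and Vardi's translation from APW to AWW~\cite{KV98b}, whose blow-up is polynomial in the number of states and exponential in the index. Plugging in the bounds from the previous paragraph, the resulting AWW has size at most
\[
\bigl(2^{O((\log n)^2)}\bigr)^{O(\log n)} \;=\; 2^{O((\log n)^3)},
\]
which gives the stated quasi-polynomial bound. Equivalence with $\A$ follows from combining Lemma~\ref{lem-eq} with the correctness of the Kupferman--Vardi translation.

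The step requiring the most care is the size bookkeeping for $\A_k$: one must check that both the state count and the transition-function size blow up only by a factor of $n^{O(\log n)}$, and that the index is truly logarithmic. Once these bounds are in hand, the chain $\A \leadsto \A_k \leadsto \text{AWW}$ yields the claimed $2^{O((\log n)^3)}$ size directly. The analogous bound for the state count (rather than overall size) follows from the same computation restricted to $|Q|\cdot |I|^{k+1} \cdot (2k+2)$ and the state blow-up of~\cite{KV98b}.
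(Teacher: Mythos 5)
Your proposal is correct and takes essentially the same route as the paper's own proof: pass to $\A_k$ with $k=1+\log n$ via Lemma~\ref{lem-eq}, bound its state count by $2^{O((\log n)^2)}$ and its index by $O(\log n)$, then apply the Kupferman--Vardi translation~\cite{KV98b} with its $O(m^{d'})$ blow-up to get $2^{O((\log n)^3)}$. The paper merely spells out a bit more explicitly the case where the size of $\A$ is dominated by its transition function (noting that $k$ and the index of $\A_k$ are independent of that size and that the translation of~\cite{KV98b} does not blow up the transition function more than the states), which your bookkeeping remark covers in substance.
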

\begin{proof}
From Lemma \ref{lem-eq}, an APW $\A$ with $n$ states and $d$ priorities is equivalent to its parameterised APW $\A_k$ for $k=1+\log n$, having $n \cdot d^k \cdot (2k+1)$ states and $2k+1$ priorities. The automaton $\A_k$ can then be turned into a weak automaton using standard techniques~\cite{KV98b} with a $O(m^{d'})$ blow-up, where $m$ is the number of states and $d'$ the number of priorities, which yields an AWW with $2^{O((\log n)^3)}$ many states, since $m$ is here $O(kn^{k+1})$, which is in $2^{O((\log n)^2)}$ and $d'=2k+1 $ is $ O(\log n)$.

If the size of $\A$ is dominated by the size $e$ of its transition function, namely when $e>n$, observe that the parameter $k$, the number of states in $\A_k$, and the number of priorities in $\A_k$ do not depend on $e$, while the size of $\A_k$'s transition function is  $O(k^2ed^k)$ is in $2^{O((\log e)^2)}$. Since the translation in \cite{KV98b} does not blow up the transition-function size more than it blows up the number of states, we end up with an AWW of size in  $2^{O((\log e)^3)}$.
\end{proof}

\subsection{Register Games and Tree automata}
While the parity hierarchy collapses for alternating word automata, that is, AWWs suffice to recognise all word languages recognised by APWs, it is strict for alternating tree automata \cite{Lin88,bradfield98strict,BL18}: For every positive integer $n$, there is an APT $\A$ with $O(n)$ states and $O(n)$ priorities, such that there is no APT equivalent to $\A$ with less than $n$ priorities. Hence the weak condition does not suffice to capture all tree-languages captures by parity automata, so a general translation from APT to AWT does not exist. However, even for APT that are equivalent to an AWT, such a translation is at least exponential~\cite{BL18}, in contradistinction to the quasipolynomial translation of APW to AWW.

%The method used for word automata fails for trees because in the word setting the register-index of the acceptance game of an (APW) automaton and an input (word) only depends on the automaton, while in the tree setting it depends on both the (APT) automaton and the input (tree).

%For our analysis, we focus on regular trees and words. Indeed, it is known that two $\omega$-regular word or tree automata are equivalent if and only if they are equivalent with respect to words or trees generated by finite Kripke structures: since alternating automata are closed under negation and intersection this is, for example, a consequence of the finite model theorem that states that all $\mu$-formulas that are satisfiable, and therefore all non-empty APTs, are satisfied by (accept) a finite structure~\cite{kozen1988finite}.

%Regular trees are the unfoldings of arbitrary Kripke structures while ultimately periodic words are unfoldings of the very restrictive `lasso' Kripke structures, namely Kripke structures in which all states have out-degree 1.
A natural question is then to consider the index hierarchy of alternating automata on entities that are ``between'' words and trees---that is, infinite trees generated by Kripke structures that are more complex than lassos, but more restrictive than trees. Examples of such classes are ``flat Kripke structures'' \cite{DDS12}, in which every SCC has a single cycle, ``weak Kripke structures'' \cite{KF11}, in which SCCs cannot have two vertex-disjoint cycles and Kripke structures with finite Cantor--Bendixson rank~\cite{blumensath2018bisimulation}.

We show that the parity hierarchy \emph{collapses logarithmically} for alternating automata on the classes of Kripke structures with bounded fvs-size in each of its SCCs. That is, for every APT $\A$ with $n$ states, there is an APT $A'$ with $O(\log n)$ priorities, such that $\A$ and $\A'$ are equivalent with respect to trees generated by Kripke structures of which the SCCs have bounded fvs-size. Kripke structures of which the SCCs have bounded fvs-size subsume Kripke structures that are flat, weak or of finite Cantor--Bendixson rank.

\begin{thm}\label{thm:Fvs}
Every APT $\A$ with $n$ states is equivalent to an APT $\A'$ with $1+\log dn$ priorities with respect to Kripke structures with fvs-size of up to $d$ in each of their SCCs.
\end{thm}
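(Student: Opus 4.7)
The plan is to read off the theorem as a direct combination of the two preceding lemmas. Concretely, take $k = 1 + \log dn$ and set $\A' = \A_k$, the parameterised automaton defined just before Lemma~\ref{lem-Ak}. Then I argue two inclusions via a single chain of biconditionals.

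First, for any Kripke structure $S$ whose SCCs each have fvs-size at most $d$, let $t$ be the tree unfolding of $S$. By definition of the acceptance game on a regular tree, $\game{t}{\A} = \game{S}{\A}$, so Lemma~\ref{lem-fvs} applies and tells us that this parity game has register index at most $1 + \log dn = k$. In particular, since this bounds the register index for both players uniformly at every position, Eve has a winning strategy in the $k$-register game $\rge{k}{\game{S}{\A}}$ from the initial position if and only if she has a winning strategy in the underlying parity game $\game{S}{\A}$ from the initial position, which is precisely the condition for $\A$ to accept $S$.

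Second, Lemma~\ref{lem-Ak} tells us that $\A_k$ accepts $t$ if and only if Eve wins the $k$-register game on $\game{t}{\A}$ from $(t,\iota)$. Chaining the two biconditionals, $\A'$ accepts $S$ iff $\A$ accepts $S$, which is the desired equivalence over the given class of Kripke structures.

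The only mild subtlety is cosmetic rather than mathematical: $\A_k$ as defined carries priorities in $[0..2k+1]$, so the priority range of $\A'$ is $O(\log dn)$ rather than exactly $1 + \log dn$; I would simply point out that the stated bound refers to the order of magnitude of the priority index (and rests on the bound $k = 1 + \log dn$ coming from Lemma~\ref{lem-fvs}). There is no real obstacle here — the heavy lifting was done in proving Theorem~\ref{lem-log} (and its consequence Lemma~\ref{lem-fvs}) and in verifying the correctness of the parameterised construction in Lemma~\ref{lem-Ak}; this theorem is a two-line assembly of those ingredients, exactly analogous to the word-case argument in Lemma~\ref{lem-eq}, just replacing the lasso fvs-size of $1$ by a general bound $d$.
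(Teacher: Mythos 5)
Your proposal is correct and matches the paper's own proof essentially verbatim: both set $k=1+\log dn$, take $\A'=\A_k$, and chain Lemma~\ref{lem-fvs} (register index of $\game{S}{\A}$ at most $k$) with Lemma~\ref{lem-Ak} ($\A_k$ accepts iff Eve wins the $k$-register game). Your side remark about the priority count is a fair observation the paper glosses over---$\A_k$ uses priorities $[0..2k+1]$, so the stated bound should be read as $O(\log dn)$---but it does not affect the argument.
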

\begin{proof}
Let $k=1+\log dn$.
Setting $\A'$ to be the $k$-parameterised version $\A_k$ of $\A$ fits the bill. Indeed:
$\A$ accepts a Kripke structure $S$ iff Eve wins $\game{S}{\A}$. 
By Lemma~\ref{lem-fvs}, for every Kripke structure $S$ with fvs-size of up to $d$ in each of its SCCs, the register index of $\game{S}{\A}$ is at most $k$.
Hence, Eve wins $\game{S}{\A}$ iff she wins the $k$-register game on $\game{S}{\A}$, and by Lemma~\ref{lem-Ak}, this happens iff $A_k$ accepts $S$.
\end{proof}

%There are several other graph measures that are known to be bounded when the fvs-size is bounded, such as the number of disjoint cycles (with respect to either vertices or edges) \cite{EP65}. Accordingly, the parity hierarchy logarithmically-collapses with respect to Kripke structures whose SCCs have a bounded number of disjoint cycles.

\begin{cor}
Every APT with $n$ states is equivalent to an APT with $O(\log n)$ priorities with respect to Kripke structures whose SCCs are bounded in one of the following measures: minimum feedback vertex set, minimum feedback edge set, maximal number of vertex-disjoint cycles, and maximal number of edge-disjoint cycles.
\end{cor}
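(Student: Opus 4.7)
The plan is to reduce each of the three additional measures to a bound on the feedback vertex set size, at which point Theorem~\ref{thm:Fvs} applies directly and yields $O(\log n)$ priorities (treating the constant measure bound as $O(1)$).

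First I would dispatch the feedback edge set case: given an FES $F$ of size $d$ in an SCC, any choice of a single endpoint per edge of $F$ yields a set of at most $d$ vertices meeting every cycle, since every cycle of the SCC contains an edge of $F$ and hence one of its endpoints; this is a feedback vertex set of size at most $d$, so Theorem~\ref{thm:Fvs} applies with the same parameter.

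For the bound on the maximum number of vertex-disjoint cycles, I would invoke the directed Erd\H{o}s--P\'osa property of Reed, Robertson, Seymour and Thomas: there is a function $f$ such that any directed graph with at most $d$ vertex-disjoint cycles has fvs-size at most $f(d)$. Feeding $f(d)$ into Theorem~\ref{thm:Fvs} still yields $1+\log(f(d)\cdot n) = O(\log n)$ priorities when $d$ is fixed. The edge-disjoint cycles case is then immediate: any vertex-disjoint packing of cycles is a fortiori edge-disjoint, so the maximum number of vertex-disjoint cycles is bounded by the maximum number of edge-disjoint cycles, and the previous case applies.

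The main technical dependency is the directed Erd\H{o}s--P\'osa theorem, a nontrivial result but well-established in the literature; the remaining reductions are elementary. A minor verification is that all four measures are applied per SCC and Theorem~\ref{thm:Fvs} also treats each SCC independently, so the reductions compose cleanly across the SCC decomposition of the Kripke structure.
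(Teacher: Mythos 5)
Your proposal is correct and follows essentially the same route as the paper: both arguments reduce the two edge-based measures to their vertex counterparts (a minimum feedback vertex set is at most a minimum feedback edge set, and a vertex-disjoint cycle packing is a fortiori edge-disjoint), then bound the fvs-size by a function of the vertex-disjoint cycle-packing number via Erd\H{o}s--P\'osa and feed the resulting constant into Theorem~\ref{thm:Fvs}. If anything, you are slightly more careful than the paper, which cites the undirected Erd\H{o}s--P\'osa theorem~\cite{EP65} even though the cycles here are directed, where the packing-covering property is due to Reed, Robertson, Seymour and Thomas with a much larger---but, for fixed $d$, still constant---bound $f(d)$, which is all the argument needs.
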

\begin{proof}
Observe that the minimum feedback vertex set is always smaller than or equal to the minimum feedback edge set, and likewise the maximal number of vertex-disjoint cycles is smaller than or equal to the maximal number of edge-disjoint cycles. Hence, it is enough to consider the vertex-version of these measures.

The result with respect to the minimum feedback vertex set follows directly from Theorem~\ref{thm:Fvs}. 
The result with respect to the maximal number of disjoint cycles follows from Theorem~\ref{thm:Fvs} and the Erd\H{o}s-P\'{o}sa Theorem \cite{EP65}, which states that the maximal number of disjoint cycles is logarithmically bounded in the minimum feedback vertex set.
\end{proof}

It may be the case that this collapse of the index-hierarchy on these structures goes further, all the way to weak, as on words, combining techniques from Kupferman and Vardi~\cite{KV98b} and Daviaud, Jurdzi\'nski and Lehtinen~\cite{DJL19} for example, or to priorities $\{1,2,3\}$ as for trees with a countable number of infinite branches~\cite{thin}. We leave this as an open problem, as well as the related question of what is the simplest class of trees on which the hierarchy is strict.

\section{Other Quasi-polynomial Automata}\label{sec:other}

In this section we discuss the relationship between our approach and the notion of \textit{separating automata} for solving parity games.
Czerwi\'nski et al.~\cite{czerwinski2019universal} have argued that the existing quasi-polynomial parity game algorithms can all be seen as separating automata. We consider the converse question: we make explicit how separating automata imply a tree automaton that recognises the winning regions of parity games of bounded size and propose a criterion delineating when a separating automaton also implies a translation of alternating parity automata into weak automata.

%We have already seen two different automata-theoretic constructions that generalise solving parity games: tree automata that recognise parity games of bounded size in which Eve has a winning strategy, and translations from alternating parity word automata to alternating weak automata.
%Boja\'nczyk and Czerwi\'nski~\cite{toolbox} have proposed a third one, based on \emph{deterministic or good-for-games safety word automata}, which captures both Calude et al's and Jurdzi\'nski and Lazi\'c's parity game algorithms.

%The aim of this section is to clarify how these three generalisations relate to each other, and how existing quasi-polynomial algorithms fit into the picture.
%In particular, existing comparisons between Calude et al's, Jurdzi\'nski and Lazi\'c's  and the register game algorithm consider them from the perspective of word automata, in particular \emph{safety} word automata. Here we complement this discussion with a tree-automata perspective and consider  the question of when an automata-theoretic solution to parity games also generalises to  word-automata translations, in the spirit of Section \ref{sec:words}. We argue that considering parity games with a bounded number of disjoint cycles, rather than bounded size, is key to this type of generalisation. We conclude this discussion with a brief note on good-for-gameness.

\subsection{From separating automata to tree automata}% The register index approach is the first quasi-polynomial algorithm to be explicitly presented in term of tree automata and, equivalently, the $\mu$ calculus. However, Boja\'nczyk and Czerwi\'nski~\cite{toolbox} have given an alternative presentation of both Calude et al.'s and Jurdzi\'nski and Lazi\'c's quasi-polynomial algorithms in terms of \emph{deterministic safety word automata}. 
Boja\'nczyk and Czerwi\'nski consider automata that separate plays that agree with winning strategies for each player in parity games of size up to $n$ with up to $d$ priorities~\cite{toolbox}.
Finding a deterministic safety automaton of size $f(n,d)$ that separates these word languages suffices to solve parity games in time polynomial in $f(n,d)$.
If the separation condition is strengthened to a separation between the language of plays that agree with a positional winning strategy for \Eve in some parity game of size $n$ with $d$ priorities and the language of plays that do not satisfy the parity condition, then a quasi-polynomial lower bound applies~\cite{czerwinski2019universal}.

From a deterministic separating word automaton,  one can build a tree automaton that recognises parity game arenas of size up to $n$ in which Eve has a winning strategy:

\begin{prop}\label{prop:gfg}
Let $\A$ be a DPW over the alphabet $[0..\maxp]$ that:
\begin{itemize}
\item Accepts words that agree with a positional winning strategy for \Eve in some parity game of size up to $n$ with up to $\maxp$ priorities, and
\item Rejects words that agree with a positional winning strategy for \Adam in some parity game of size up to $n$ with up to $\maxp$ priorities.
\end{itemize}

Then, there is an APT $\A'$ of the same size and acceptance condition as $\A$ that recognises parity game arenas of size up to $n$ with up to $p$ priorities in which Eve has a winning strategy.
\end{prop}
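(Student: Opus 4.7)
The plan is to build $\A'$ by overlaying the deterministic separating automaton $\A$ on the input $\Gabc^{\maxp}$-tree, using the tree labels to decide at each node whether Eve or Adam controls the branching. Concretely, set $Q' = Q$, keep the initial state and priority assignment $\Omega$ of $\A$, and define
\[
\delta'(q, E_i) \;=\; \Diamond\, \delta(q, i), \qquad \delta'(q, A_i) \;=\; \Box\, \delta(q, i),
\]
where $\delta$ is the transition function of $\A$ (well-defined as a single target state since $\A$ is deterministic). This keeps the number of states, the index, and the maximum subformula size of the transition function identical to those of $\A$, and the acceptance condition is inherited verbatim.

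For correctness I would look at the acceptance game $\game{t}{\A'}$ for a $\Gabc^{\maxp}$-tree $t$ representing a parity game $\G$ of size at most $n$ with priorities up to $\maxp$. By construction this game is essentially $\G$ itself---Eve moves the token at $E_i$-nodes, Adam at $A_i$-nodes---except that the winning condition is replaced by ``the infinite priority word along the play is accepted by $\A$''. So a strategy in $\game{t}{\A'}$ is exactly a strategy in $\G$, and its outcomes produce exactly the priority words labelling plays in $\G$.

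Now I appeal to positional determinacy (the theorem of Emerson--Jurdzi\'nski / Mostowski cited in the excerpt). If Eve wins $\G$ she has a positional winning strategy $\strat$; every play consistent with $\strat$ agrees with a positional winning strategy of Eve in a parity game of size at most $n$ with at most $\maxp$ priorities, so by the first bullet of the hypothesis on $\A$ its priority word is accepted, and thus $\strat$ wins in $\game{t}{\A'}$, i.e.\ $\A'$ accepts $t$. Symmetrically, if Adam wins $\G$, his positional winning strategy $\stratA$ produces priority words that $\A$ rejects by the second bullet, so Adam wins $\game{t}{\A'}$ and $\A'$ rejects $t$.

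The only mildly delicate point, and the one I would write up carefully, is the bookkeeping matching the ``size and acceptance condition'' claim: the alphabet grows from $[0..\maxp]$ to $\Gabc^{\maxp}$, but $|\Gabc^{\maxp}| = O(\maxp)$ is dominated by $|\A|$ in the size measure of the excerpt, and each new transition formula is just a single $\Diamond$- or $\Box$-wrapping of an $\A$-transition, so the subformula count does not grow. Beyond that, the argument is a direct unpacking of definitions using positional determinacy---no combinatorial obstacle is expected.
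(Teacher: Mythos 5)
Your proposal is correct and takes essentially the same approach as the paper: the identical overlay construction with $\delta'(q,E_i)=\Diamond\,\delta(q,i)$ and $\delta'(q,A_i)=\Box\,\delta(q,i)$, keeping the state space, initial state and priority assignment of $\A$, and the same correctness argument in which each player copies a positional winning strategy from the game into the acceptance game and the separation property of $\A$ decides the winner. Your added bookkeeping on the size measure and the explicit appeal to positional determinacy merely spell out details the paper leaves implicit.
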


\begin{proof}
While $\A$ operates on an alphabet of priorities, $\A'$ operates on the richer game alphabet $\Gabc^\maxp$ that also encodes the ownership of nodes.
$\A'$ has the same state-space, initial state and priority assignment as $\A$. The only difference is the transition function of $\A'$ which is simply $\delta'(q,E_i)=\Diamond \delta(q,i)$ and  $\delta'(q,A_i)=\Box \delta(q,i)$. In other words,
 $\A'$ gives the decision of which branch to choose to the player who owns the current position, but otherwise operates exactly as $\A$. Whichever player has a winning strategy in a parity game $\G$  of size $n$ with up to $d$ priorities encoded as a $\Gabc$-graph can copy their positional winning strategy in the acceptance game $\game{\G}{\A'}$; since $\A$ separates the plays resulting from such strategies, $\A'$ will accepts if and only if \Eve has a winning strategy in $\G$.
\end{proof}

\begin{rem} 
Proposition \ref{prop:gfg} extends to \emph{good for games} word automata~\cite{HP06,BL19}. These are (not necessarily deterministic) automata $\A$ over an alphabet $\Sigma$ for which, given any $\Sigma$-labelled arena $\G$---that is, a graph of which the positions are partitioned between two players and labelled with $\Sigma$---\Eve wins the synchronised product of $\G$ and $\A$ if and only if she has a strategy $\sigma$ in $\G$ such that every path that agrees with $\sigma$ forms a word accepted by $\A$. %Deterministic automata are trivially good for games, while for non-deterministic automata the notion of \emph{history determinism} implies and is sometimes used instead of good-for-gameness~\cite{henzinger2006solving}. For alternating automata, the notion of good-for-gameness, although briefly considered in~\cite{colcombet2013fonctions}, has not been studied in depth.
Good-for-games parity automata, like deterministic automata, can be turned into APTs in the way outlined in Proposition \ref{prop:gfg}. %This is trivially true for a deterministic automaton; the requirement of good-for-gameness allows this proposition to extend beyond deterministic automata.

However, the register automaton that recognises whether Eve wins the $k$-register game over an arena of size $n$ with up to $d$ priorities, although separating plays that agree with a winning positional strategy for Eve from plays that are winning for Adam, is neither deterministic nor good for games. First, observe that for $k>0$, it is indeed trivially a separating automaton for words with up to $d$ priorities since all words have register index $1$. It is not good for games (of register index larger than $k$). It is only ``good for small games" in the sense that when interpreted on games, it is only guaranteed to operates correctly on parity games of size up to $2^{k-1}$.

\end{rem}

\subsection{From tree automata to automata transformations}

As we have seen in Section \ref{sec:words}, register games are also suited for handling the  parity games of unbounded size that stem from word automata, thus allowing for a quasi-polynomial translation from alternating parity word automata to alternating weak automata. The same generalisation is not immediate for  Calude et al.'s and Jurdzi\'nski and Lazi\'c's algorithms; in particular, a safety-automata based approach, as the one proposed by Czerwi\'nski et al.~\cite{czerwinski2019universal} is unlikely to suffice since safety automata are not as expressive as parity automata. This raises the following question: when does an automaton that recognises the winning regions of parity games of bounded size imply a translation from alternating parity word automata into weak automata? 

We propose infinite directed acyclic graphs (dags) of bounded width as a key ingredient. The acceptance games for word automata take this shape, which make them an interesting stepping stone between words and trees. We shall see that tree automata with an acceptance condition X (e.g. B\"uchi or weak) that recognise infinite parity game dags of \textit{bounded width} (rather than of bounded size) in which Eve has a winning strategy can be used to turn alternating parity word automata into word automata with the same acceptance condition X. The translation is easy: it consists of the synchronised composition of the two automata.

The technical details of the synchronised composition are cumbersome, but the idea is straight-forward: the synchronised composition of $\B$ and $\A$ is an automaton that accepts a tree $\tree$ if and only if $\A$ accepts the acceptance game of $\B$ and $\tree$ when viewed as a $\Gabc$-tree.

\begin{defi}[Synchronised Composition]
Let $\A=(\Sigma^A,Q^A,\iota^A,\delta^A,\Omega^A)$ be an APT with priorities up to $d$. Let $\B=(\Gabc^d,Q^B,\iota^B,\delta^B,\Omega^B)$ be an APT over the game alphabet $\Gabc^d$.

The synchronised product $\product{\B}{\A}$ is defined as:

\begin{itemize}
\item State space $Q^A\times Q^B$;
\item Alphabet $\Sigma^A$;
\item $\Omega(q_A,q_B)=\Omega^B(q_B)$.
\item Transition relation: $\delta((q_A,q_B),\alpha)=f^\alpha(q_A,\delta^B(q_B,\mathrm{label}(q_A))$ where:

\begin{AutoMultiColItemize}

\item $f^\alpha(q_A,\Diamond p) = \Diamond f^\alpha(\delta^A(q_A,\alpha),p)$
\item $f^\alpha(q_A,\Box p) = \Box f^\alpha(\delta^A(q_A,\alpha),p)$

\item $f^\alpha(b\wedge b',\Diamond p)= f^\alpha(b,p) \vee f^\alpha(b',p)$
\item $f^\alpha(b\vee b',\Diamond p)= f^\alpha(b,p) \vee f^\alpha(b',p)$
\item $f^\alpha(\Diamond q,\Diamond p)= \Diamond (q,p)$
\item $f^\alpha(\Box q,\Diamond p)= \Diamond (q,p)$
\item $f^\alpha(b\wedge b',\Box p)= f^\alpha(b,p) \wedge f^\alpha(b',p)$
\item $f^\alpha(b\vee b',\Box p)= f^\alpha(b,p) \wedge f^\alpha(b',p)$
\item $f^\alpha(\Diamond q,\Box p)= \Box (q,p)$
\item $f^\alpha(\Box q,\Box p)= \Box (q,p)$
\item $f^\alpha(b,c\vee c')=f^\alpha(b,c)\vee f^\alpha(b,c')$
\item $f^\alpha(b,c\wedge c')=f^\alpha(b,c) \wedge f^\alpha(b,c')$

\item $f^\alpha(b,p)=f^\alpha(b,\delta^B(p,\lab(b)))$ where $p\in Q^B$ and
\item $\mathrm{label}(b\wedge b')=\mathrm{label}(\Box q)=A_0$
\item $\mathrm{label}(b \vee b')=\mathrm{label}(\Diamond q)=E_0$
\item $\mathrm{label}(q)=E_{\Omega^{\A}(q)}$
  \end{AutoMultiColItemize}
\end{itemize}

Observe that $\product{\A}{\B}$ has the acceptance condition of $\B$. Indeed, the product construction preserves the priorities and the non-reachability of states of $\B$: if $q_B'$ is not reachable from $q_B$, then $(q_A,q_B')$ is not reachable from any $(q_A',q_B')$. Then, if $\B$ is weak, then $\product{\A}{\B}$ is weak. 
\end{defi}

\begin{lem}\label{lem:prod-eq}

$\product{\A}{\B}$ accepts a $\Sigma^A$-tree $t$ if and only if $\B$ accepts the acceptance game $\game{t}{\A}$ seen as a $\Gabc^d$-tree for $d$ the maximal priority in $\A$. %and $\game{t}{\product{\A}{\B}}$ have the same winner.
\end{lem}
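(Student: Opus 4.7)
The plan is to exhibit a structure-preserving bijection $\Phi$ between the positions (hence the plays) of $\game{t}{\product{\A}{\B}}$ and those of $\game{\game{t}{\A}}{\B}$ which preserves both ownership and priority sequences; from this, equivalence of winners, and therefore of acceptance by $\product{\A}{\B}$ and by $\B$ on $\game{t}{\A}$, follows immediately.

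At the level of state-positions, $\Phi$ sends $(u,(q_A,q_B))$ in $\game{t}{\product{\A}{\B}}$ to $((u,q_A),q_B)$ in $\game{\game{t}{\A}}{\B}$. The real work is to extend $\Phi$ to the intermediate positions produced by unfolding transition formulas; the guiding invariant is that a position $(u, f^\alpha(b,c))$ of the product game---where $b$ is either a state of $\A$ or a sub-formula of some $\delta^A(q_A,\alpha)$, and $c$ is a sub-formula of $\B$'s transition on the label of $(u,b)$---corresponds to the composed-game position $((u,b),c)$, with $(u,b)$ viewed as a game-tree node of $\game{t}{\A}$ and $c$ a $\B$-formula still to be processed there.

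The correspondence is then verified by case analysis on the defining clauses of $f^\alpha$. For each clause I would check: (i) the top-level connective of $f^\alpha(b,c)$ matches the ownership of the corresponding position $((u,b),c)$ (a disjunction whenever Eve chooses, a conjunction whenever Adam chooses, and a $\Diamond$ / $\Box$ exactly when the composed game moves to a new node of the input tree); (ii) the successors in the product game are in $\Phi$-bijection with the successors in the composed game; and (iii) the ``state-unwrapping'' clause $f^\alpha(b,p)=f^\alpha(b,\delta^B(p,\lab(b)))$ mirrors $\B$ reading the $\Gabc^d$-label of the game-tree position $(u,b)$ in the composed game, using exactly the label convention $\lab$ encoded in the product. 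For instance, when $\B$'s formula is $\Diamond p$ and $\A$'s is $b\wedge b'$, the product yields $f^\alpha(b,p)\vee f^\alpha(b',p)$ (Eve's choice), matching $((u,b\wedge b'),\Diamond p)$ where Eve's $\Diamond$ lets her pick between the children $(u,b)$ and $(u,b')$. Once this case analysis is complete, the observation that only the paired state-positions carry nonzero priority---equal to $\Omega^B(q_B)$ in the product by definition, and equal to $\Omega^B(q_B)$ in the composed game because $\B$ is the outer automaton---shows that $\Phi$-corresponding plays visit the same priorities in the same order; transferring (positional) winning strategies across $\Phi$ then gives both directions of the lemma.

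The main obstacle will be formalising the bijection at the intermediate formula-positions, because the clauses of $f^\alpha$ are asymmetric: $\B$'s outermost operator is always processed first, so an $\A$-quantifier may be ``held in abeyance'' across several rewriting steps corresponding to $\B$ reading a sequence of $E_0$/$A_0$/$E_i$ labels in the composed game. One has to check that this bookkeeping never loses or reorders a choice of either player, and in particular that the state-unwrapping clause correctly absorbs the moves of $\B$ across the intermediate formula-nodes of $\game{t}{\A}$ that carry priority $0$ and so do not affect the parity outcome.
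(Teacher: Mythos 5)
Your proposal is correct and follows essentially the same route as the paper's proof: the paper likewise identifies positions $((u,q_A),q_B)$ and $((u,b_A),b_B)$ of $\game{\game{t}{\A}}{\B}$ with $(u,(q_A,q_B))$ and $(u,f^\alpha(b_A,b_B))$ of $\game{t}{\product{\A}{\B}}$, and verifies by case analysis on the clauses of $f^\alpha$ that this identification preserves successors, ownership and priorities, with the winner's preservation following. Your additional remark about the asymmetric processing order of $\B$'s outermost operator and the priority-$0$ intermediate nodes is a point the paper's proof passes over silently, but it is bookkeeping rather than a new idea.
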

\begin{proof}
We show that $\game{\game{t}{\A}}{\B}$ and $\game{t}{\product{\A}{\B}}$ have the same winner.
The transition relation is designed so that 
 $\game{\game{t}{A}}{B}$ is identical to $\game{t}{\product{\A}{\B}}$. More precisely, we identify positions $((t,q_A),q_B)$ and $((t,b_A),b_B)$  in $\game{\game{t}{A}}{B}$ with $(t,(q_A,q_b))$ and $(t,f^\alpha(b_A,b_B))$ in $\game{t}{\product{\A}{\B}}$, respectively, where $\alpha$ is the label of $t$. We now show that this mapping preserves successors, position ownership and priorities; the preservation of winner follows.
\begin{description}
\item[Preservation of successors] We observe:

\begin{itemize}
\item $((t,q_A), q_B)$ has successor $((t,q_A),\delta^B(q_B,E_{\Omega(q_A)}))$

\item $(t,(q_A,q_B))$ has successor $(t,f^\alpha(q_A,\delta^B(q_B,E_{\Omega(q_A)})))$

\item $((t, b\wedge b'), \Diamond p))$ has successors $((t,b),p)$ and $((t,b'),p)$
\item $(t,f^\alpha(b\wedge b',\Diamond p))$ has successors $(t,f^\alpha(b,p))$ and $((t,f^\alpha(b',p)))$

\item $((t, \Diamond q), \Diamond p))$ has successors $((t',q),p)$ for all children $t'$ of $t$.
\item $(t,f(\Diamond q, \Diamond p))$ has successors $(t',(q,p))$ for all children $t'$ of $t$.

\item Similarly for other combinations of modalities and boolean operators.

\end{itemize}

\item[Preservation of ownership] Eve owns position $((t,b_a),b_B)$ in $\game{\game{t}{A}}{B}$ whenever $b_A$ is a disjunction or a $\Diamond$-formula, and positions with a unique successor. Similarly, Eve owns $(t,f^\alpha(b_a,b_B))$ in $\game{t}{\product{\A}{\B}}$ whenever $b_B$ is a disjunction or $\Diamond$-formula, and positions with a unique successor.

\item[Preservation of priorities] The priority of both $((t,q_A),q_B)$ and $(t,(q_A,q_B))$ is  $\Omega^B(p)$ and the priority of other positions is $0$.
\end{description}

Then, since $\game{\game{t}{\A}}{\B}$ and $\game{t}{\product{\A}{\B}}$ must have the same winner, $\product{\A}{\B}$ accepts a tree $t$ if and only if $\B$ accepts $\game{\A}{\B}$.
\end{proof}

Then, if for a class $\class$ of trees, for all $\tree \in \class$, $\B$ accepts $\G(\tree,\A)$ if and only if $\A$ accepts $\tree$, then $\product{\A}{\B}$ is equivalent to $\A$ over $\class$. In other words, to turn an automaton $\A$ into a simpler automaton that is equivalent on a particular class of structures, it suffices to find an automaton $\B$ with a simple acceptance condition that accepts only the winning acceptance games of $\A$ over that class of structures. In particular, to turn a parity word automaton into a B\"uchi or weak automaton, it suffices to study B\"uchi or weak automata that recognise the winner in acceptance games over words.

\begin{defi}
An infinite dag is of bounded width $m$ if its vertices can be partitioned into sets $L_0, L_1, L_2, \dots$ no larger than $n$ 
such that every edge goes from some layer~$L_i$ to the next
layer~$L_{i+1}$.
\end{defi}
\begin{cor}
An alternating parity automaton $\B_{(m,\maxp)}$  that over regular parity game dags of bounded width $m$ with up to $\maxp$ priorities recognises those in which Eve has a winning strategy induces a translation from alternating parity word automata to alternating automata with the number of priorities of $\B_{(m,\maxp)}$ and with state and size blow-up linear in $|B_{(m,\maxp)}|$.
\end{cor}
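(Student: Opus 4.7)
The plan is to apply the synchronised product construction from Lemma~\ref{lem:prod-eq}. Given an APW $\A$ with $n$ states and at most $\maxp$ priorities, let $m$ be the width bound (linear in $n$) established below, and take the translated automaton to be $\product{\A}{\B_{(m,\maxp)}}$. By the synchronised product construction this automaton has state space $Q^\A \times Q^{\B_{(m,\maxp)}}$, transition function of size linear in $|\B_{(m,\maxp)}|$ once $\A$ is fixed, and the acceptance condition and number of priorities of $\B_{(m,\maxp)}$, giving exactly the stated linear blow-up.

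The first key step is to observe that for any $\omega$-word $w$, the acceptance game $\game{w}{\A}$, viewed as a $\Gabc^\maxp$-labelled dag rather than as its tree unfolding, is a parity game dag of width linear in $|\A|$. Its vertices partition into layers indexed by the suffixes of $w$, each layer containing either a state of $\A$ or a subformula of some $\delta(q,w_i)$; edges within a layer resolve the boolean structure of the transition condition while edges to the next layer consume the next letter. Taking $m$ to be this width bound places $\game{w}{\A}$ inside the class of dags on which $\B_{(m,\maxp)}$ operates.

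The second key step is to verify that $\A$ and $\product{\A}{\B_{(m,\maxp)}}$ agree on ultimately periodic words. For such a $w$, $\game{w}{\A}$ is a \emph{regular} parity game dag of width at most $m$ with priorities up to $\maxp$, so by the hypothesis on $\B_{(m,\maxp)}$ it is accepted by $\B_{(m,\maxp)}$ iff Eve wins $\game{w}{\A}$, which in turn holds iff $\A$ accepts $w$. By Lemma~\ref{lem:prod-eq} this is equivalent to $\product{\A}{\B_{(m,\maxp)}}$ accepting $w$. Since both automata recognise $\omega$-regular languages and two such languages agreeing on all ultimately periodic words must be equal~\cite{mcnaughton1966testing}, the equivalence extends to every $\omega$-word.

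The main point of care will be reconciling the layered-dag view of $\game{w}{\A}$ for ultimately periodic $w$ with the nominal semantics of $\B_{(m,\maxp)}$ as a tree automaton on $\Gabc^\maxp$-trees: one must confirm that its acceptance on the finite (regular) graph presentation agrees with its acceptance on the infinite tree unfolding, and that the width bound from the second paragraph is uniform in~$w$. Beyond this bookkeeping, the construction is entirely driven by the synchronised product, and the bound on priorities and size follows directly from the construction of $\product{\A}{\B_{(m,\maxp)}}$.
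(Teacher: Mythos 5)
Your proposal is correct and takes essentially the same route as the paper: both construct the translated automaton as the synchronised product $\product{\A}{\B_{(m,\maxp)}}$, observe that the acceptance game $\game{w}{\A}$ of an ultimately periodic word is a regular parity game dag of width bounded linearly in $|\A|$ with priorities up to $\maxp$, invoke Lemma~\ref{lem:prod-eq}, and extend equivalence from ultimately periodic words to all words via~\cite{mcnaughton1966testing}. Your extra care about the layering (intra-layer edges for boolean resolution, which strictly speaking requires refining the layers to meet the paper's definition of bounded width) and about regular-dag versus tree-unfolding semantics addresses bookkeeping the paper leaves implicit, but does not change the argument.
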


\begin{proof}
The acceptance game of a ultimately periodic word $w$ and an APW $\A$ with up to $\maxp$ priorities is a regular parity game dag of width $|\A|$ with up to $\maxp$ priorities; hence $\B_{(|\A|,\maxp)}$ recognises the winning regions of the acceptance games of $\A$ over ultimately periodic words. The automaton $\product{\A}{\B_{(|\A|,\maxp)}}$ is, from Lemma \ref{lem:prod-eq}, equivalent to $\A$ on ultimately periodic words and therefore over all words. The synchronised product $\product{\A}{\B_{(|\A|,\maxp)}}$ is therefore equivalent to $\A$ and has the acceptance condition of $\B_{(|\A|,\maxp)}$. The blow-up is linear in $\B_{(|\A|,\maxp)}$
\end{proof}

The register index approach is an instantiation of this method: indeed, the key measure we use to bound the register index, \dc, is also bounded in finite structures that can be unfolded into a dag of bounded width.
From these observations, it seems that reasoning about parity games of arbitrary size but with a bounded number of disjoint cycles, or infinite parity games of bounded width, rather than only finite parity games, is a key distinction between algorithms for solving finite parity games and translations from one type of automata into another. Indeed, this strategy was recently used by Daviaud, Jurdzi\'nski and Lehtinen to provide an alternative quasi-polynomial APW to AWW translation, based on universal trees, with size increase in $n^{O(\log \frac{d}{\log n})}$~\cite{DJL19}.

\section{Conclusions} \label{sec:conclusions}

We have presented an automata-theoretic take on solving parity games in quasi-polynomial time based on the notion of register games. This perspective enabled us to go beyond finite parity games, and provide a quasi-polynomial translation of alternating parity automata into weak automata.

\subsubsection*{Solving parity games in practice} This article has focused on the automata-theoretic aspect of register games, rather than how they can be used to solve parity games in practice. We would therefore forgive the reader for questioning the practicality of this quasi-polynomial algorithm, which, among its less flattering features, has quasi-polynomial space complexity.

Recent work by Parys~\cite{Parys20}, as well as by Daviaud, Jurdzi\'nski and Thejaswini~\cite{strahler},  show that the register-game approach can be optimised to have  both time and space-complexity in line with the state-of-the-art.
Furthermore, there is much unexplored potential both for improving the practical viability of this algorithm, and for using the insights of this analysis to improve existing solvers.

Indeed, we have seen that building parity games with truly complex winning strategies---in the sense of requiring more than a small constant number of register in the register game---is subtle business: see the construction in Lemma \ref{example-high}. We therefore conjecture that the parameterised version of this algorithm, which solves parity games of register index $1$, is a plausible candidate for solving \emph{most} reasonable parity games in polynomial time, and perhaps also in practice. Increasing the parameter to $2$ or $3$ could ensure it only fails for purpose-built counter-examples.
The insight that even complex-looking parity games tend to have low register index could be useful in itself, for optimising existing solvers, which are not necessarily quasi-polynomial, but still more effective in practice.

Finally, the register index approach seems suited for a symbolic implementation: given a symbolically represented parity game, the $k$-register game can also be represented symbolically without significant blow-up. This would bypass the space-complexity of the algorithm for parity games that benefit from concise symbolic representations.

\subsubsection*{Further open problems} We have, throughout this article, pointed to some problems that remain open. The most obvious, of course, is the complexity of solving parity games. Another fundamental question is the conciseness gap between alternating parity and alternating weak word automata: a quasi-polynomial upper bound and a quasi-linear lower bound. One of the difficulties for closing this gap is the lack of techniques on one hand to prove lower bounds for alternating automata, and on the other to use alternations effectively to describe the winning regions of parity games.

We also mentioned the parity index problem, which is connected to the automata-theoretic concerns that this article touches upon. The question of where exactly, when moving from words to trees, the index hierarchy becomes strict, is particularly interesting.

\bibliographystyle{alpha}% the recommnded bibstyle
\bibliography{bib}

\end{document}